\newtheorem*{remark}{Remark}
\newcommand{\safemath}[2]{\newcommand{#1}{\ensuremath{#2}\xspace}}
\safemath{\bma}{\mathbf{a}}
\safemath{\bmb}{\mathbf{b}}
\safemath{\bmc}{\mathbf{c}}
\safemath{\bmd}{\mathbf{d}}
\safemath{\bme}{\mathbf{e}}
\safemath{\bmf}{\mathbf{f}}
\safemath{\bmg}{\mathbf{g}}
\safemath{\bmh}{\mathbf{h}}
\safemath{\bmi}{\mathbf{i}}
\safemath{\bmj}{\mathbf{j}}
\safemath{\bmk}{\mathbf{k}}
\safemath{\bml}{\mathbf{l}}
\safemath{\bmm}{\mathbf{m}}
\safemath{\bmn}{\mathbf{n}}
\safemath{\bmo}{\mathbf{o}}
\safemath{\bmp}{\mathbf{p}}
\safemath{\bmq}{\mathbf{q}}
\safemath{\bmr}{\mathbf{r}}
\safemath{\bms}{\mathbf{s}}
\safemath{\bmt}{\mathbf{t}}
\safemath{\bmu}{\mathbf{u}}
\safemath{\bmv}{\mathbf{v}}
\safemath{\bmw}{\mathbf{w}}
\safemath{\bmx}{\mathbf{x}}
\safemath{\bmy}{\mathbf{y}}
\safemath{\bmz}{\mathbf{z}}
\safemath{\bmzero}{\mathbf{0}}
\safemath{\bmone}{\mathbf{1}}
\bmdefine{\biad}{a}
\bmdefine{\bibd}{b}
\bmdefine{\bicd}{c}
\bmdefine{\bidd}{d}
\bmdefine{\bied}{e}
\bmdefine{\bifd}{f}
\bmdefine{\bigd}{g}
\bmdefine{\bihd}{h}
\bmdefine{\biid}{i}
\bmdefine{\bijd}{j}
\bmdefine{\bikd}{k}
\bmdefine{\bild}{l}
\bmdefine{\bimd}{m}
\bmdefine{\bind}{n}
\bmdefine{\biod}{o}
\bmdefine{\bipd}{p}
\bmdefine{\biqd}{q}
\bmdefine{\bird}{r}
\bmdefine{\bisd}{s}
\bmdefine{\bitd}{t}
\bmdefine{\biud}{u}
\bmdefine{\bivd}{v}
\bmdefine{\biwd}{w}
\bmdefine{\bixd}{x}
\bmdefine{\biyd}{y}
\bmdefine{\bizd}{z}
\bmdefine{\bixid}{\xi}
\bmdefine{\bilambdad}{\lambda}
\bmdefine{\bimud}{\mu}
\bmdefine{\bithetad}{\theta}
\bmdefine{\biphid}{\phi}
\bmdefine{\bideltad}{\delta}
\safemath{\bmia}{\biad}
\safemath{\bmib}{\bibd}
\safemath{\bmic}{\bicd}
\safemath{\bmid}{\bidd}
\safemath{\bmie}{\bied}
\safemath{\bmif}{\bifd}
\safemath{\bmig}{\bigd}
\safemath{\bmih}{\bihd}
\safemath{\bmii}{\biid}
\safemath{\bmij}{\bijd}
\safemath{\bmik}{\bikd}
\safemath{\bmil}{\bild}
\safemath{\bmim}{\bimd}
\safemath{\bmin}{\bind}
\safemath{\bmio}{\biod}
\safemath{\bmip}{\bipd}
\safemath{\bmiq}{\biqd}
\safemath{\bmir}{\bird}
\safemath{\bmis}{\bisd}
\safemath{\bmit}{\bitd}
\safemath{\bmiu}{\biud}
\safemath{\bmiv}{\bivd}
\safemath{\bmiw}{\biwd}
\safemath{\bmix}{\bixd}
\safemath{\bmiy}{\biyd}
\safemath{\bmiz}{\bizd}
\safemath{\bmxi}{\bixid}
\safemath{\bmlambda}{\bilambdad}
\safemath{\bmmu}{\bimud}
\safemath{\bmtheta}{\bithetad}
\safemath{\bmphi}{\biphid}
\safemath{\bmdelta}{\bideltad}
\safemath{\bA}{\mathbf{A}}
\safemath{\bB}{\mathbf{B}}
\safemath{\bC}{\mathbf{C}}
\safemath{\bD}{\mathbf{D}}
\safemath{\bE}{\mathbf{E}}
\safemath{\bF}{\mathbf{F}}
\safemath{\bG}{\mathbf{G}}
\safemath{\bH}{\mathbf{H}}
\safemath{\bI}{\mathbf{I}}
\safemath{\bJ}{\mathbf{J}}
\safemath{\bK}{\mathbf{K}}
\safemath{\bL}{\mathbf{L}}
\safemath{\bM}{\mathbf{M}}
\safemath{\bN}{\mathbf{N}}
\safemath{\bO}{\mathbf{O}}
\safemath{\bP}{\mathbf{P}}
\safemath{\bQ}{\mathbf{Q}}
\safemath{\bR}{\mathbf{R}}
\safemath{\bS}{\mathbf{S}}
\safemath{\bT}{\mathbf{T}}
\safemath{\bU}{\mathbf{U}}
\safemath{\bV}{\mathbf{V}}
\safemath{\bW}{\mathbf{W}}
\safemath{\bX}{\mathbf{X}}
\safemath{\bY}{\mathbf{Y}}
\safemath{\bZ}{\mathbf{Z}}
\safemath{\bZero}{\mathbf{0}}
\safemath{\bOne}{\mathbf{1}}
\safemath{\bDelta}{\mathbf{\Delta}}
\safemath{\bLambda}{\mathbf{\UpLambda}}
\safemath{\bPhi}{\mathbf{\Upphi}}
\safemath{\bSigma}{\mathbf{\Upsigma}}
\safemath{\bOmega}{\mathbf{\Upomega}}
\safemath{\bTheta}{\mathbf{\Uptheta}}
\bmdefine{\biAd}{A}
\bmdefine{\biBd}{B}
\bmdefine{\biCd}{C}
\bmdefine{\biDd}{D}
\bmdefine{\biEd}{E}
\bmdefine{\biFd}{F}
\bmdefine{\biGd}{G}
\bmdefine{\biHd}{H}
\bmdefine{\biId}{I}
\bmdefine{\biJd}{J}
\bmdefine{\biKd}{K}
\bmdefine{\biLd}{L}
\bmdefine{\biMd}{M}
\bmdefine{\biNd}{N}
\bmdefine{\biOd}{O}
\bmdefine{\biPd}{P}
\bmdefine{\biQd}{Q}
\bmdefine{\biRd}{R}
\bmdefine{\biSd}{S}
\bmdefine{\biTd}{T}
\bmdefine{\biUd}{U}
\bmdefine{\biVd}{V}
\bmdefine{\biWd}{W}
\bmdefine{\biXd}{X}
\bmdefine{\biYd}{Y}
\bmdefine{\biZd}{Z}
\bmdefine{\biDelta}{\Delta}
\bmdefine{\biLambda}{\Lambda}
\bmdefine{\biPhi}{\Phi}
\bmdefine{\biSigma}{\Sigma}
\bmdefine{\biOmega}{\Omega}
\bmdefine{\biTheta}{\Theta}
\safemath{\bimA}{\biAd}
\safemath{\bimB}{\biBd}
\safemath{\bimC}{\biCd}
\safemath{\bimD}{\biDd}
\safemath{\bimE}{\biEd}
\safemath{\bimF}{\biFd}
\safemath{\bimG}{\biGd}
\safemath{\bimH}{\biHd}
\safemath{\bimI}{\biId}
\safemath{\bimJ}{\biJd}
\safemath{\bimK}{\biKd}
\safemath{\bimL}{\biLd}
\safemath{\bimM}{\biMd}
\safemath{\bimN}{\biNd}
\safemath{\bimO}{\biOd}
\safemath{\bimP}{\biPd}
\safemath{\bimQ}{\biQd}
\safemath{\bimR}{\biRd}
\safemath{\bimS}{\biSd}
\safemath{\bimT}{\biTd}
\safemath{\bimU}{\biUd}
\safemath{\bimV}{\biVd}
\safemath{\bimW}{\biWd}
\safemath{\bimX}{\biXd}
\safemath{\bimY}{\biYd}
\safemath{\bimZ}{\biZd}
\safemath{\bimDelta}{\biDelta}
\safemath{\bimLambda}{\biLambda}
\safemath{\bimPhi}{\biPhi}
\safemath{\bimSigma}{\biSigma}
\safemath{\bimOmega}{\biOmega}
\safemath{\bimTheta}{\biTheta}
\safemath{\setA}{\mathcal{A}}
\safemath{\setB}{\mathcal{B}}
\safemath{\setC}{\mathcal{C}}
\safemath{\setD}{\mathcal{D}}
\safemath{\setE}{\mathcal{E}}
\safemath{\setF}{\mathcal{F}}
\safemath{\setG}{\mathcal{G}}
\safemath{\setH}{\mathcal{H}}
\safemath{\setI}{\mathcal{I}}
\safemath{\setJ}{\mathcal{J}}
\safemath{\setK}{\mathcal{K}}
\safemath{\setL}{\mathcal{L}}
\safemath{\setM}{\mathcal{M}}
\safemath{\setN}{\mathcal{N}}
\safemath{\setO}{\mathcal{O}}
\safemath{\setP}{\mathcal{P}}
\safemath{\setQ}{\mathcal{Q}}
\safemath{\setR}{\mathcal{R}}
\safemath{\setS}{\mathcal{S}}
\safemath{\setT}{\mathcal{T}}
\safemath{\setU}{\mathcal{U}}
\safemath{\setV}{\mathcal{V}}
\safemath{\setW}{\mathcal{W}}
\safemath{\setX}{\mathcal{X}}
\safemath{\setY}{\mathcal{Y}}
\safemath{\setZ}{\mathcal{Z}}
\safemath{\emptySet}{\varnothing}
\safemath{\colA}{\mathscr{A}}
\safemath{\colB}{\mathscr{B}}
\safemath{\colC}{\mathscr{C}}
\safemath{\colD}{\mathscr{D}}
\safemath{\colE}{\mathscr{E}}
\safemath{\colF}{\mathscr{F}}
\safemath{\colG}{\mathscr{G}}
\safemath{\colH}{\mathscr{H}}
\safemath{\colI}{\mathscr{I}}
\safemath{\colJ}{\mathscr{J}}
\safemath{\colK}{\mathscr{K}}
\safemath{\colL}{\mathscr{L}}
\safemath{\colM}{\mathscr{M}}
\safemath{\colN}{\mathscr{N}}
\safemath{\colO}{\mathscr{O}}
\safemath{\colP}{\mathscr{P}}
\safemath{\colQ}{\mathscr{Q}}
\safemath{\colR}{\mathscr{R}}
\safemath{\colS}{\mathscr{S}}
\safemath{\colT}{\mathscr{T}}
\safemath{\colU}{\mathscr{U}}
\safemath{\colV}{\mathscr{V}}
\safemath{\colW}{\mathscr{W}}
\safemath{\colX}{\mathscr{X}}
\safemath{\colY}{\mathscr{Y}}
\safemath{\colZ}{\mathscr{Z}}
\safemath{\opA}{\mathbb{A}}
\safemath{\opB}{\mathbb{B}}
\safemath{\opC}{\mathbb{C}}
\safemath{\opD}{\mathbb{D}}
\safemath{\opE}{\mathbb{E}}
\safemath{\opF}{\mathbb{F}}
\safemath{\opG}{\mathbb{G}}
\safemath{\opH}{\mathbb{H}}
\safemath{\opI}{\mathbb{I}}
\safemath{\opJ}{\mathbb{J}}
\safemath{\opK}{\mathbb{K}}
\safemath{\opL}{\mathbb{L}}
\safemath{\opM}{\mathbb{M}}
\safemath{\opN}{\mathbb{N}}
\safemath{\opO}{\mathbb{O}}
\safemath{\opP}{\mathbb{P}}
\safemath{\opQ}{\mathbb{Q}}
\safemath{\opR}{\mathbb{R}}
\safemath{\opS}{\mathbb{S}}
\safemath{\opT}{\mathbb{T}}
\safemath{\opU}{\mathbb{U}}
\safemath{\opV}{\mathbb{V}}
\safemath{\opW}{\mathbb{W}}
\safemath{\opX}{\mathbb{X}}
\safemath{\opY}{\mathbb{Y}}
\safemath{\opZ}{\mathbb{Z}}
\safemath{\opZero}{\mathbb{O}}
\safemath{\identityop}{\opI}
\safemath{\veca}{\bma}
\safemath{\vecb}{\bmb}
\safemath{\vecc}{\bmc}
\safemath{\vecd}{\bmd}
\safemath{\vece}{\bme}
\safemath{\vecf}{\bmf}
\safemath{\vecg}{\bmg}
\safemath{\vech}{\bmh}
\safemath{\veci}{\bmi}
\safemath{\vecj}{\bmj}
\safemath{\veck}{\bmk}
\safemath{\vecl}{\bml}
\safemath{\vecm}{\bmm}
\safemath{\vecn}{\bmn}
\safemath{\veco}{\bmo}
\safemath{\vecp}{\bmp}
\safemath{\vecq}{\bmq}
\safemath{\vecr}{\bmr}
\safemath{\vecs}{\bms}
\safemath{\vect}{\bmt}
\safemath{\vecu}{\bmu}
\safemath{\vecv}{\bmv}
\safemath{\vecw}{\bmw}
\safemath{\vecx}{\bmx}
\safemath{\vecy}{\bmy}
\safemath{\vecz}{\bmz}
\safemath{\veczero}{\bmzero}
\safemath{\vecone}{\bmone}
\safemath{\vecxi}{\bmxi}
\safemath{\veclambda}{\bmlambda}
\safemath{\vecmu}{\bmmu}
\safemath{\vectheta}{\bmtheta}
\safemath{\vecphi}{\bmphi}
\safemath{\vecdelta}{\bmdelta}
\safemath{\matA}{\bA}
\safemath{\matB}{\bB}
\safemath{\matC}{\bC}
\safemath{\matD}{\bD}
\safemath{\matE}{\bE}
\safemath{\matF}{\bF}
\safemath{\matG}{\bG}
\safemath{\matH}{\bH}
\safemath{\matI}{\bI}
\safemath{\matJ}{\bJ}
\safemath{\matK}{\bK}
\safemath{\matL}{\bL}
\safemath{\matM}{\bM}
\safemath{\matN}{\bN}
\safemath{\matO}{\bO}
\safemath{\matP}{\bP}
\safemath{\matQ}{\bQ}
\safemath{\matR}{\bR}
\safemath{\matS}{\bS}
\safemath{\matT}{\bT}
\safemath{\matU}{\bU}
\safemath{\matV}{\bV}
\safemath{\matW}{\bW}
\safemath{\matX}{\bX}
\safemath{\matY}{\bY}
\safemath{\matZ}{\bZ}
\safemath{\matzero}{\bmzero}
\safemath{\matDelta}{\bDelta}
\safemath{\matLambda}{\bLambda}
\safemath{\matPhi}{\bPhi}
\safemath{\matSigma}{\bSigma}
\safemath{\matOmega}{\bOmega}
\safemath{\matTheta}{\bTheta}
\safemath{\matidentity}{\matI}
\safemath{\matone}{\matO}
\safemath{\rnda}{A}
\safemath{\rndb}{B}
\safemath{\rndc}{C}
\safemath{\rndd}{D}
\safemath{\rnde}{E}
\safemath{\rndf}{F}
\safemath{\rndg}{G}
\safemath{\rndh}{H}
\safemath{\rndi}{I}
\safemath{\rndj}{J}
\safemath{\rndk}{K}
\safemath{\rndl}{L}
\safemath{\rndm}{M}
\safemath{\rndn}{N}
\safemath{\rndo}{O}
\safemath{\rndp}{P}
\safemath{\rndq}{Q}
\safemath{\rndr}{R}
\safemath{\rnds}{S}
\safemath{\rndt}{T}
\safemath{\rndu}{U}
\safemath{\rndv}{V}
\safemath{\rndw}{W}
\safemath{\rndx}{X}
\safemath{\rndy}{Y}
\safemath{\rndz}{Z}
\safemath{\rveca}{\bimA}
\safemath{\rvecb}{\bimB}
\safemath{\rvecc}{\bimC}
\safemath{\rvecd}{\bimD}
\safemath{\rvece}{\bimE}
\safemath{\rvecf}{\bimF}
\safemath{\rvecg}{\bimG}
\safemath{\rvech}{\bimH}
\safemath{\rveci}{\bimI}
\safemath{\rvecj}{\bimJ}
\safemath{\rveck}{\bimK}
\safemath{\rvecl}{\bimL}
\safemath{\rvecm}{\bimM}
\safemath{\rvecn}{\bimN}
\safemath{\rveco}{\bomO}
\safemath{\rvecp}{\bimP}
\safemath{\rvecq}{\bimQ}
\safemath{\rvecr}{\bimR}
\safemath{\rvecs}{\bimS}
\safemath{\rvect}{\bimT}
\safemath{\rvecu}{\bimU}
\safemath{\rvecv}{\bimV}
\safemath{\rvecw}{\bimW}
\safemath{\rvecx}{\bimX}
\safemath{\rvecy}{\bimY}
\safemath{\rvecz}{\bimZ}
\safemath{\rvecxi}{\bmxi}
\safemath{\rveclambda}{\bmlambda}
\safemath{\rvecmu}{\bmmu}
\safemath{\rvectheta}{\bmtheta}
\safemath{\rvecphi}{\bmphi}
\safemath{\rmatA}{\bimA}
\safemath{\rmatB}{\bimB}
\safemath{\rmatC}{\bimC}
\safemath{\rmatD}{\bimD}
\safemath{\rmatE}{\bimE}
\safemath{\rmatF}{\bimF}
\safemath{\rmatG}{\bimG}
\safemath{\rmatH}{\bimH}
\safemath{\rmatI}{\bimI}
\safemath{\rmatJ}{\bimJ}
\safemath{\rmatK}{\bimK}
\safemath{\rmatL}{\bimL}
\safemath{\rmatM}{\bimM}
\safemath{\rmatN}{\bimN}
\safemath{\rmatO}{\bimO}
\safemath{\rmatP}{\bimP}
\safemath{\rmatQ}{\bimQ}
\safemath{\rmatR}{\bimR}
\safemath{\rmatS}{\bimS}
\safemath{\rmatT}{\bimT}
\safemath{\rmatU}{\bimU}
\safemath{\rmatV}{\bimV}
\safemath{\rmatW}{\bimW}
\safemath{\rmatX}{\bimX}
\safemath{\rmatY}{\bimY}
\safemath{\rmatZ}{\bimZ}
\safemath{\rmatDelta}{\bimDelta}
\safemath{\rmatLambda}{\bimLambda}
\safemath{\rmatPhi}{\bimPhi}
\safemath{\rmatSigma}{\bimSigma}
\safemath{\rmatOmega}{\bimOmega}
\safemath{\rmatTheta}{\bimTheta}
\newenvironment{textbmatrix}{	\setlength{\arraycolsep}{2.5pt}%
								\big[\begin{matrix}}{\end{matrix}\big]%
								\raisebox{0.08ex}{\vphantom{M}}}
\def\be{\begin{equation}}
\def\ee{\end{equation}}
\def\een{\nonumber \end{equation}}
\def\mat{\begin{bmatrix}}
\def\emat{\end{bmatrix}}
\def\btm{\begin{textbmatrix}}
\def\etm{\end{textbmatrix}}
\def\ba#1\ea{\begin{align}#1\end{align}}
\def\bas#1\eas{\begin{align*}#1\end{align*}}
\def\bs#1\es{\begin{split}#1\end{split}} 
\def\bg#1\eg{\begin{gather}#1\end{gather}}
\def\bml#1\eml{\begin{multline}#1\end{multline}}
\def\bi#1\ei{\begin{itemize}#1\end{itemize}}
\newcommand{\lefto}{\mathopen{}\left}
\newcommand{\abs}[1]{\lefto\lvert#1\right\rvert}		
\newcommand{\vecnorm}[1]{\lefto\lVert#1\right\rVert}		
\safemath{\dirac}{\delta}					
\safemath{\krond}{\dirac}					
\safemath{\upto}{\uparrow}
\safemath{\downto}{\downarrow}
\safemath{\iu}{j}							
\safemath{\ev}{\lambda}						
\safemath{\hilseqspace}{l^{2}}				
\newcommand{\banachfunspace}[1]{\setL^{#1}}	
\safemath{\hilfunspace}{\banachfunspace{2}}	
\safemath{\SNR}{\text{\sc snr}} 				
\safemath{\No}{N_0}							
\safemath{\Es}{E_s}							
\safemath{\Eb}{E_b}							
\safemath{\EbNo}{\frac{\Eb}{\No}}
\safemath{\EsNo}{\frac{\Es}{\No}}
\DeclareMathOperator{\CHop}{\ensuremath{\opH}} 
\safemath{\tvir}{\rndh_{\CHop}}				
\safemath{\tvtf}{\rndl_{\CHop}}				
\safemath{\spf}{\rnds_{\CHop}}				
\safemath{\bff}{H_{\CHop}}					
\safemath{\ircf}{r_{h}}						
\safemath{\tftvcf}{r_{s}}					
\safemath{\tfcf}{r_{l}}						
\safemath{\bfcf}{r_{H}}						
\safemath{\tcorr}{c_h}						
\safemath{\scf}{c_{s}}						
\safemath{\tfcorr}{c_{l}}					
\safemath{\fcorr}{c_{H}}						
\safemath{\mi}{I}							
\safemath{\capacity}{C}						
\safemath{\normal}{\mathcal{N}}			
\safemath{\jpg}{\mathcal{CN}}			
\safemath{\mchain}{\leftrightarrow}		
\safemath{\dB}{\,\mathrm{dB}}
\safemath{\dBm}{\,\mathrm{dBm}}
\safemath{\Hz}{\,\mathrm{Hz}}
\safemath{\kHz}{\,\mathrm{kHz}}
\safemath{\MHz}{\,\mathrm{MHz}}
\safemath{\GHz}{\,\mathrm{GHz}}
\safemath{\s}{\,\mathrm{s}}
\safemath{\ms}{\,\mathrm{ms}}
\safemath{\mus}{\,\mathrm{\text{\textmu}s}}
\safemath{\ns}{\,\mathrm{ns}}
\safemath{\ps}{\,\mathrm{ps}}
\safemath{\meter}{\,\mathrm{m}}
\safemath{\mm}{\,\mathrm{mm}}
\safemath{\cm}{\,\mathrm{cm}}
\safemath{\m}{\,\mathrm{m}}
\safemath{\W}{\,\mathrm{W}}
\safemath{\mW}{\, \mathrm{mW}}
\safemath{\J}{\,\mathrm{J}}
\safemath{\K}{\,\mathrm{K}}
\safemath{\bit}{\,\mathrm{bit}}
\safemath{\nat}{\,\mathrm{nat}}
\safemath{\define}{\triangleq}			
\safemath{\equivalent}{\sim}
\safemath{\distas}{\sim}					
\safemath{\sdiff}{\Delta}				
\safemath{\reals}{\mathbb{R}}
\safemath{\positivereals}{\reals_{+}}
\safemath{\integers}{\mathbb{Z}}
\safemath{\posint}{\integers_{+}}
\safemath{\naturals}{\mathbb{N}}
\safemath{\posnaturals}{\naturals_{+}}
\safemath{\complexset}{\mathbb{C}}
\safemath{\rationals}{\mathbb{Q}}
\newcommand*{\fancyrefapplabelprefix}{app}		
\newcommand*{\fancyrefthmlabelprefix}{thm}		
\newcommand*{\fancyreflemlabelprefix}{lem}		
\newcommand*{\fancyrefcorlabelprefix}{cor}		
\newcommand*{\fancyrefdeflabelprefix}{def}		
\newcommand*{\fancyrefalglabelprefix}{alg}		
\newcommand*{\fancyrefproplabelprefix}{prop}		
\newcommand*{\fancyrefexmpllabelprefix}{exmpl}
\newcommand*{\fancyreftbllabelprefix}{tbl}
 \newtheorem{thm}{Theorem}
 \newtheorem{lem}[thm]{Lemma}
\safemath{\dictab}{[\,\dicta\,\,\dictb\,]}
\safemath{\ysig}{\bmy}
\safemath{\ysighat}{\hat{\ysig}}
\safemath{\ysigdim}{M}
\safemath{\xsig}{\bmx}
\safemath{\xsigdim}{N}
\safemath{\nx}{n_x}
\safemath{\zsig}{\bmz}
\safemath{\zsigdim}{\ysigdim}
\safemath{\rsig}{\bmr}
\safemath{\Adict}{\bA}
\safemath{\Adicttilde}{\widetilde{\Adict}}
\safemath{\Adictdim}{\outputdim\times\xsigdim}
\safemath{\avec}{\bma}
\safemath{\avectilde}{\tilde{\avec}}
\safemath{\Bdict}{\bB}
\safemath{\Bdicttilde}{\widetilde{\Bdict}}
\safemath{\Cdict}{\bC}
\safemath{\cvec}{\bmc}
\safemath{\Ddict}{\bD}
\safemath{\Ddictdim}{\ysigdim\times\xsigdim}
\safemath{\dvec}{\bmd}
\safemath{\Ddicttilde}{\widetilde{\bD}}
\safemath{\Bonb}{\bB}
\safemath{\bvec}{\bmb}
\safemath{\Bonbdim}{\ysigdim\times\ysigdim}
\safemath{\noise}{\bmn}
\safemath{\noisedim}{\ysigim}
\safemath{\err}{\bme}
\safemath{\errdim}{\ysigdim}
\safemath{\errset}{\setE}
\safemath{\nerr}{n_e}
\safemath{\delop}{\bP_\errset}
\safemath{\delopc}{\bP_{{\errset}^c}}
\safemath{\cplxi}{\imath}
\safemath{\cplxj}{\jmath}
\safemath{\dict}{\matD}
\safemath{\inputdim}{N}		
\safemath{\outputdim}{M}		
\safemath{\sparsity}{S}	
\safemath{\inputdimA}{{N_a}}	
\safemath{\inputdimB}{{N_b}}	
\safemath{\elemA}{{n_a}}	
\safemath{\elemB}{{n_b}}	
\safemath{\resA}{\matR_a}	
\safemath{\resB}{\matR_b}	
\safemath{\subD}{\matS} 
\safemath{\subA}{\matS_a} 
\safemath{\subB}{\matS_b} 
\safemath{\dicta}{\matA} 	
\safemath{\dictb}{\matB} 	
\safemath{\hollowS}{H}
\safemath{\hollowA}{H_a}
\safemath{\hollowB}{H_b}
\safemath{\cross}{Z}
\safemath{\coh}{\mu_d}			
\safemath{\coha}{\mu_a}			
\safemath{\cohb}{\mu_b}			
\safemath{\mubs}{\nu}	
\safemath{\cohm}{\mu_m} 
\safemath{\dictset}{\setD}	
\safemath{\dictsetp}{\dictset(\coh,\coha,\cohb)}	
\safemath{\dictsetgen}{\dictset_\text{gen}}
\safemath{\dictsetgenp}{\dictsetgen(\coh)}
\safemath{\dictsetonb}{\dictset_\text{onb}}
\safemath{\dictsetonbp}{\dictsetonb(\coh)}
\safemath{\leftside}{U}
\safemath{\rightsideA}{R_a}
\safemath{\rightsideB}{R_b}
\safemath{\indexS}{\setI_S} 
\safemath{\na}{n_a}			
\safemath{\nb}{n_b}			
\safemath{\coeffa}{p_i}	
\safemath{\coeffb}{q_j}	
\safemath{\seta}{\setP}		
\safemath{\setb}{\setQ}     
\safemath{\setw}{\setW}	
\safemath{\setz}{\setZ}	
\safemath{\cola}{\veca}		
\safemath{\colb}{\vecb}		
\safemath{\cold}{\vecd}		
\safemath{\inputvec}{\vecx} 	
\safemath{\error}{\vece}	
\safemath{\noiseout}{\vecz} 	
\safemath{\inputvecel}{x}
\safemath{\inputveca}{\vecx_a}
\safemath{\inputvecb}{\vecx_b}
\safemath{\outputvec}{\vecy}	
\safemath{\lambdamin}{\lambda_{\mathrm{min}}}
\newcommand{\normtwo}[1]{\vecnorm{#1}_2}
\safemath{\elltwo}{\ell_2}
\safemath{\ellone}{\ell_1}
\safemath{\ellzero}{\ell_0}
\safemath{\ellinf}{\ell_\infty}
\safemath{\licard}{Z(\coh,\coha,\cohb)}
\safemath{\xsol}{\hat{x}}
\safemath{\xbord}{x_b}		
\safemath{\xstat}{x_s}		
\safemath{\xstatLone}{\tilde{x}_s}
\safemath{\order}{\mathcal{O}} 
\safemath{\scales}{\Theta} 
\safemath{\ones}{\mathbf{1}} 
\safemath{\zeroes}{\mathbf{0}} 
\safemath{\thlone}{\kappa(\coh,\cohb)} 
\safemath{\constoneA}{\delta} 
\safemath{\constoneB}{\epsilon} 
\safemath{\nlarge}{L}				   
\safemath{\sumlarge}{S_\nlarge}
\safemath{\maxlarger}{P_\nlarge}	   
\safemath{\Pzero}{\textrm{P0}}	
\safemath{\Pone}{\textrm{P1}}
\safemath{\vecfir}{\vecw}			 
\safemath{\vecsec}{\vecz}
\safemath{\elvecfir}{w}              
\safemath{\elvecsec}{z}				 
\safemath{\nlargefir}{n}
\safemath{\normout}{\gamma}
\safemath{\auxfun}{h}
\safemath{\supp}{\textrm{supp}}
\safemath{\indexa}{\ell}
\safemath{\indexb}{r}
\safemath{\indexc}{i}
\safemath{\indexd}{j}
\safemath{\project}{P}
\icmltitlerunning{Near-Isometric Binary Hashing for Large-scale Datasets}
\begin{document} 

\twocolumn[
\icmltitle{Near-Isometric Binary Hashing for Large-scale Datasets}

\icmlauthor{Amirali Aghazadeh}{amirali@rice.edu}
\icmladdress{Department of Electrical and Computer Engineering, Rice University, Houston, TX, USA}
\icmlauthor{Andrew Lan}{mr.lan@sparfa.com}
\icmladdress{Department of Electrical and Computer Engineering, Rice University, Houston, TX, USA}
\icmlauthor{Anshumali Shrivastava}{anshumali@rice.edu}
\icmladdress{Department of Computer Science, Rice University, Houston, TX, USA}
\icmlauthor{Richard Baraniuk}{richb@rice.edu}
\icmladdress{Department of Electrical and Computer Engineering, Rice University, Houston, TX, USA}


\vskip 0.3in
]

\begin{abstract} 

We develop a scalable algorithm to learn binary hash codes for indexing large-scale datasets.
Near-isometric binary hashing (NIBH) is a data-dependent hashing scheme that quantizes the output of a learned low-dimensional embedding to obtain a binary hash code.
In contrast to conventional hashing schemes, which typically rely on an $\ell_2$-norm (i.e., average distortion) minimization, NIBH is based on a $\ell_{\infty}$-norm (i.e., worst-case distortion) minimization that provides several benefits, including superior distance, ranking, and near-neighbor preservation performance.
We develop a practical and efficient algorithm for NIBH based on column generation that scales well to large datasets.
A range of experimental evaluations demonstrate the superiority of NIBH over ten state-of-the-art binary hashing schemes.

\end{abstract} 

\section{Introduction} \label{sec:intro}

\sloppy 

Hashing, one of the primitive operations in large-scale systems, seeks a low-dimensional binary embedding of a high-dimensional data set.
Such a binary embedding can increase the computational efficiency of a variety of tasks, including searching, learning, near-neighbor retrieval, etc. 
One of the fundamental challenges in machine learning is the development of efficient hashing algorithms that embed data points into compact binary codes while preserving the geometry of the original dataset.

In this paper, we are interested in learning \emph{near-isometric} binary embeddings, i.e., hash functions that preserve the distances between data points up to a given distortion in the Hamming space.
More rigorously, let $\mathcal{Z}$ and $\mathcal{Y}$ denote metric spaces with metrics $d_\mathcal{Z}$ and $d_\mathcal{Y}$, respectively. 
An embedding $f: \mathcal{Z} \rightarrow \mathcal{Y}$ is called near-isometric~\citep[Def.~1.1]{yanivplan} if, for \emph{every} pair of data points $\vecz_i, \vecz_j \in \mathcal{Z}$, we have 
\begin{align*}
 d_\mathcal{Z}(\vecz_i,\vecz_j) - \gamma \! \leq \! d_\mathcal{Y}(f(\vecz_i),f(\vecz_j)) \! \leq \! d_\mathcal{Z}(\vecz_i,\vecz_j) + \gamma,
\end{align*}
where $\gamma$ is called the isometry constant.
In words, $f$ is near-isometric if and only if the entries of the \emph{pairwise-distortion vector} containing the distance distortion between every pair of data points 
$| d_\mathcal{Z}(\vecz_i,\vecz_j)-d_\mathcal{Y}(f(\vecz_i),f(\vecz_j))|, \forall i>j$ do not exceed the isometry constant $\gamma$.
A near-isometric embedding is approximately \emph{distance-preserving} in that the distance between any pairs of data points in the embedded space $\mathcal{Y}$ is approximately equal to their distance in the ambient space $\mathcal{Z}$~\cite{weinberger2006unsupervised,shaw2007minimum,numax}.

The simplest and most popular binary hashing scheme, {\em random projection}, simply projects the data into a lower-dimensional (lower-bit) random subspace and then quantizes to binary values.
Random projections are known to be near-isometric with high probability, due to the celebrated Johnson-Lindenstrauss (JL) lemma~\cite{lsh,andoni2014beyond,yanivplan}. 
Algorithms based on the JL lemma belong to the family of probabilistic dimensionality reduction techniques; a notable example is locality sensitive hashing (LSH) \cite{lsh,andoni2014beyond}.
Unfortunately, theoretical results on LSH state that the number of bits required to guarantee an isometric embedding can be as large as the number of data points~\cite{lsh,yanivplan}. 
Even in practice, LSH's requirement on the number of bits is impractically high for indexing many real-world, large-scale datasets \cite{lv2007multi}.

Consequently, several \emph{data-dependent} binary hashing algorithms have been developed that leverage the structure of the data to learn compact binary codes.
These methods enable a significant reduction in the number of bits required to index large-scale datasets compared to LSH; see~\cite{wang2014hashing} for a survey.
However, learning compact binary codes that preserve the local geometry of the data remains challenging.

These data-dependent hashing algorithms focus on the choice of the distortion measure. 
Typically, after finding the appropriate distortion measure, the hash functions are learned by minimizing the \emph{average} distortion, i.e., the $\ell_2$-norm of the pairwise-distortion vector, which sums the distortion among all pairwise distances with equal weights.
\emph{Binary reconstructive embedding} (BRE) \cite{kulis2009learning}, for example, uses an optimization algorithm to directly minimize the average distortion in the embedded sapce. 
\emph{Spectral Hashing} (SH) \cite{weiss2009spectral}, \emph{Anchor Graph Hashing} (AGH) \cite{liu2011hashing}, \emph{Multidimensional Spectral Hashing} (MDSH) \cite{weiss2012multidimensional}, and \emph{Scalable Graph Hashing} (SGH) \cite{jiang2015scalable} define notions of \emph{similarity} based on a function of $\ell_2$-distance between the data points ${\| \vecz_i - \vecz_j \|}_2$ and use spectral methods to learn hash functions that keep similar points sufficiently close.
Some other hashing algorithms first project the data onto its principal components, e.g., \emph{PCA Hashing} (PCAH) \cite{jolliffe2002principal}, which embeds with minimal average distortion, and then learn a rotation matrix to minimize the quantization error \cite{gong2011iterative} or balance the variance across the components (\emph{Isotropic Hashing} (IsoHash) \cite{kong2012isotropic}).

While minimizing the average distortion seems natural, this approach can sacrifice the preservation of certain pairwise distances in favor of others. 
As we demonstrate below, this can lead to poor performance in certain applications, such as the preservation of nearest neighbors.
In response, in this paper, we develop a new data-driven hashing algorithm that minimizes the \emph{worst-case} distortion among the pairwise distances, i.e., the $\ell_{\infty}$-norm of the pairwise-distortion vector.

Figure~\ref{fig:LinfL2} illustrates the advantages of minimizing the $\ell_{\infty}$-norm of the pairwise-distortion vector instead of its $\ell_2$-norm.
Consider three clusters of points in a two-dimensional space.
We compute the optimal one-dimensional embeddings of the data points by minimizing the $\ell_\infty$-norm and the $\ell_2$-norm of the pairwise-distortion vector using a grid search over the angular orientation of the line that represents the embedding.
We evaluate the near-neighbor preservation of a given query point in the embedded space (shown in \fref{fig:LinfL2}~(b)). 
For a query point $q$, located without loss of generality at the origin, the nearest neighbor ranking from the ambient space is destroyed by the $\ell_2$-optimal embedding, since the circle and square clusters overlap.
In contrast, the $\ell_{\infty}$-optimal embedding exactly preserves the rankings.

This illustration emphasizes the importance of preserving \emph{relevant} distances in data retrieval tasks. To minimize the average distortion, the $\ell_2$-optimal embedding (dashed red line) sacrifices the square--circle distances in favor of the square--star and circle--star distances, which contribute more to the $\ell_2$-norm of the pairwise distance distortion vector.
In contrast, the $\ell_{\infty}$-optimal embedding focuses on the hardest distances to preserve (i.e., the worst-case distortion), leading to an embedding with smaller isometry constant than the $\ell_2$-optimal embedding. 
Preservation of these distances is critical for near-neighbor retrieval.

\begin{figure}[t]
\vspace{-0.1cm}
\centering
\includegraphics[width=0.45\textwidth]{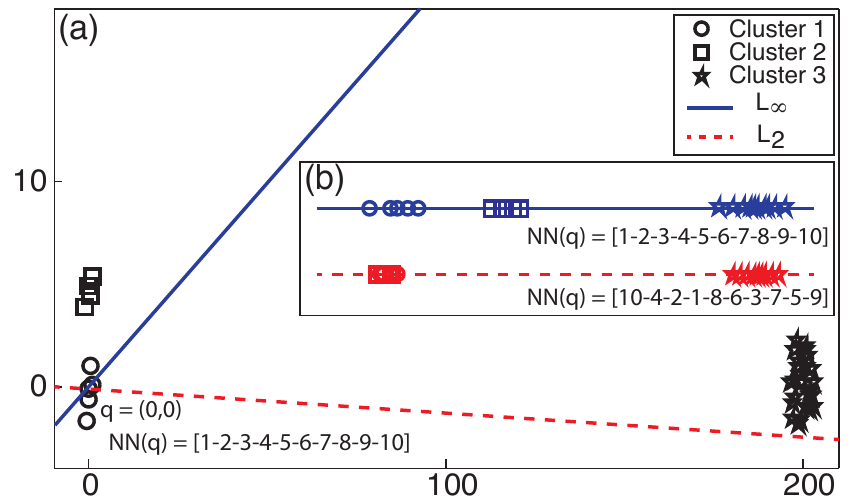}\label{fig:LinfL2}
\vspace{-.1cm}
\caption{Comparison of the near-neighbor (NN) preservation performance of hashing based on minimizing the $\ell_{\infty}$-norm (worst-case distortion) vs.\ the $\ell_2$-norm (average distortion) of the pairwise distance distortion vector on an illustrative data set. (a)~For a dataset with three clusters (5 circles, 5 squares, and 60 stars), we found the optimal embeddings for both error metrics using grid search. (b)~The projection of the data points using the $\ell_{\infty}$-optimal embedding preserves three well-separated clusters; however the projection using the $\ell_2$-optimal embedding mixes circles and squares, projecting them into a single cluster.  For the query point $\text{q}=(0,0)$, all of its nearest neighbors NN(q) are preserved with the correct ordering using the worst-based distortion embedding but not the average distortion embedding.
}
\label{fig:str}
\vspace{-.5cm}
\end{figure}

\vspace{-0.2cm}
\subsection{Contributions}
\label{sec:contributions}

We make four distinct contributions in this paper.  
First, conceptually, we advocate minimizing the worst-case distortion, which is formulated as an $\ell_\infty$-norm minimization problem, and show that this approach outperforms approaches based on minimizing the average, $\ell_2$-norm distortion in a range of computer vision and learning scenarios \cite{linfvsl2}. 

Second, algorithmically, since $\ell_\infty$-norm minimization problems are computationally challenging, especially for large datasets, we develop two accelerated and scalable algorithms to find the optimal worst-case embedding.
The first, \emph{near-isometric binary hashing} (NIBH), is based on the alternating direction method of multipliers (ADMM) framework \cite{boydadmm}. 
The second, NIBH-CG, is based on an accelerated greedy extension of the NIBH algorithm using the concept of \emph{column generation} \cite{dantzig1960decomposition}.
NIBH-CG can rapidly learn hashing functions from large-scale data sets that require the preservation of \emph{billions} of pairwise distances (e.g., \emph{MNIST}).

Third, theoretically, since current data-dependent hashing algorithms do not offer any probabilistic guarantees in terms of preserving near-neighbors, we develop new theory to prove that, under natural assumptions regarding the data distribution and with a notion of hardness of near-neighbor search, NIBH preserves the nearest neighbors with high probability.  
Our analysis approach could be of independent interest for obtaining theoretical guarantees for other data-dependent hashing schemes. 

Fourth, experimentally, we demonstrate the superior performance of NIBH as compared to ten state-of-the-art binary hashing algorithms using an exhaustive set of experimental evaluations involving six diverse datasets and three different performance metrics (near-isometry, Hamming distance ranking, and kendall $\tau$ ranking performance). 
In particular, we show that NIBH achieves the same distance preservation and Hamming ranking performance as state-of-the-art algorithms {\em while using up to $60\%$ fewer bits.}
Our experiments clearly show the superiority of the $\ell_\infty$-norm formulation over the more classical $\ell_2$-norm formulation that underlies many hashing algorithms, such as BRE and IsoHash. 
Our formulation also outperforms recently developed techniques that assume more structure in their hash functions, such as \emph{Spherical Hamming Distance Hashing} (SHD) \cite{heo2012spherical} and \emph{Circulant Binary Embedding} (CBE) \cite{yu2014circulant}.
\vspace{-0.2cm}

\fussy

\section{Near-Isometric Binary Hashing} 
\label{sec:algos} 

The standard formulation for data dependent binary hash function embeds a data point $\vecx\in\mathbb{R}^N$ into the low-{\break}dimensional Hamming space $\mathcal{H}=\{0,1\}^M$ by first multiplying it by an {\em embedding matrix} $\bW\in\mathbb{R}^{M\times N}$ and then quantizing the entries of the product $\bW\vecx$ to binary values:
\begin{align} \label{eq:hashfn}
h(\bW \vecx) = \frac{1+\text{sgn}(\bW \vecx)}{2}.
\end{align}
The function $\text{sgn}(\cdot)$ operates element-wise on the entries of $\bW \vecx$, transforming the real-valued vector $\bW \vecx$ into a set of binary codes depending on the sign of the entries in $\bW \vecx$.

\subsection{Problem formulation}
\label{sec:prob}

\sloppy
Consider the design of an embedding $f$ that maps $Q$ high-dimensional data vectors $\mathcal{X}=\{\vecx_1,\vecx_2, \dots, \allowbreak \vecx_Q\}$ in the ambient space $\mathbb{R}^N$ into low-dimensional binary codes  $\mathcal{H} = \{\vech_1,\vech_2,\dots,\vech_Q\}$ in the Hamming space with $\vech_i \in \{0,1\}^M$, where $\vech_i = f(\vecx_i)$, $i=1,\dots,Q$, and $M \ll N$.
Define the distortion of the embedding by 
\begin{align*}
\delta= & \underset{\lambda>0}{\text{inf}} \sup_{(i,j)\in \Omega} |\lambda {d}_{H}(\vech_i,\vech_j) - {d}(\vecx_i,\vecx_j)|, \\
& \text{with} \quad \Omega =  \{(i,j):i,j \in\{1,2,\dots,Q\},i>j \},
\end{align*}
where $d(\vecx_i,\vecx_j)$ denotes the Euclidean distance between the data points $\vecx_i$, $\vecx_j$, ${d}_{H}(\vech_i,\vech_j)$ denotes the Hamming distance between the binary codes $\vech_i$ and $\vech_j$, and $\lambda$ is a positive scaling variable. 
The distortion $\delta$ measures the worst-case deviation from perfect isometry (i.e., optimal distance preservation) among all pairs of data points.
Define the \emph{secant set} $\mathcal{S}(\mathcal{X})$ as $\mathcal{S}(\mathcal{X}) = \{ \vecx_i - \vecx_j : (i,j) \in \Omega \}$, i.e., the set of all pairwise difference vectors in $\mathcal{X}$.
Let $|\mathcal{S}(\mathcal{X})| = |\Omega| = {Q(Q-1)}/{2}$ denote the size of the secant set. 
Note that the common distortion measure utilized in other hashing algorithms is the average distortion, i.e., $\sum_{i>j}  (\lambda {d}_{H}(\vech_i,\vech_j) - {d}(\vecx_i,\vecx_j))^2/|\Omega|$.

We formulate the problem of minimizing the distortion parameter $\delta$ as the following optimization problem: 
\begin{align*}
\underset{\bW,\lambda > 0}{\text{minimize}} \underset{(i,j)\in \Omega}{\sup} \abs{ \lambda \normtwo{h(\bW \vecx_i) - h(\bW \vecx_j)}^2 - \normtwo{\vecx_i - \vecx_j} },
\end{align*}
since the squared $\elltwo$-distance between a pair of binary codes is equivalent to their Hamming distance up to a scaling factor that can be absorbed into $\lambda$.
We can rewrite the above optimization problem as 
\begin{align*} 
(\text{P}^*) \quad \underset{\bW,\lambda>0}{\text{minimize}}\,\,\, \| \lambda \vecv'(\bW)  - \vecc \|_\infty,
\end{align*}
\sloppy
where $\vecv' \in \mathbb{N}^{{Q(Q-1)}/{2}}$ is a vector containing the pairwise Hamming distances between the embedded data vectors $\normtwo{h(\vecx_i) - h(\vecx_j)}^2$, and $\vecc$ is a vector containing the pairwise $\elltwo$-distances between the original data vectors. 
Intuitively, the $\ell_{\infty}$-norm objective optimizes the \emph{worst-case} distortion among all pairs of data points.

The problem $(\text{P}^*)$ is a combinatorial problem with complexity $\mathcal{O} (Q^{2M})$.
To overcome the combinatorial nature of the problem, we approximate the hash function $h(\cdot)$ by the sigmoid function (also known as the inverse logit link function) $\sigma(x)=(1+e^{-x})^{-1}$. 
This enables us to approximate ($\text{P}^*$) by the following optimization problem:
\begin{align*} 
(\text{P}) \quad \underset{\bW,\lambda>0}{\text{minimize}}\,\,\, \| \lambda \vecv(\bW)  - \vecc \|_\infty,
\end{align*}
where $\vecv \in \mathbb{R}_+^{{Q(Q-1)}/{2}}$ is a vector containing the pairwise $\ell_2$ distances between the embedded data vectors after sigmoid relaxation $\normtwo{(1+e^{-\bW \vecx_i})^{-1} - (1+e^{-\bW \vecx_j})^{-1}}^2$. 
Here the sigmoid function operates element-wise on $\bW \vecx_i$. 
In practice we use a more general definition of the sigmoid function, defined as $\sigma_\alpha(x)=(1+e^{-\alpha x})^{-1}$, where $\alpha$ is the \emph{rate} parameter controlling how closely it approximates the non-smooth function $h(\cdot)$. The following lemma characterizes the quality of such an approximation (see the Appendix for a proof).

\begin{lem} \label{lem:approx}
Let $x$ be a Gaussian random variable as $x \sim \mathcal{N}(\mu, \sigma^2)$. Define the distortion of the sigmoid approximation at $x$ as $\abs{h(x)-\sigma_\alpha(x)}$. Then, the expected distortion is bounded as $\mathbb{E}_x[ \abs{h(x)-\sigma_\alpha(x)} ] \leq \frac{1}{\sigma\sqrt{2\pi\alpha}} + 2 e^{-(\sqrt{\alpha} + c /\alpha \sigma^2)}$,
where $c$ is a positive constant. As $\alpha$ goes to infinity, the expected distortion goes to $0$. 
\end{lem}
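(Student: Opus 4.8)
The plan is to reduce the random quantity $|h(x)-\sigma_\alpha(x)|$ to a single even function of $x$ and then to integrate it against the Gaussian density by splitting the real line at a threshold tuned to balance the two terms of the claimed bound. First I would note that $h(x)=\tfrac{1+\mathrm{sgn}(x)}{2}$ takes only the values $0$ and $1$, so for $x>0$ we have $|h(x)-\sigma_\alpha(x)|=1-\sigma_\alpha(x)=\tfrac{1}{1+e^{\alpha x}}$, while for $x<0$ we have $|h(x)-\sigma_\alpha(x)|=\sigma_\alpha(x)=\tfrac{1}{1+e^{\alpha|x|}}$. Hence in both cases the pointwise distortion equals the even function $g(x)=\tfrac{1}{1+e^{\alpha|x|}}$, which attains its maximum $\tfrac12$ at the origin and satisfies $g(x)\le e^{-\alpha|x|}$ everywhere. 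Writing $p$ for the $\mathcal{N}(\mu,\sigma^2)$ density, the goal becomes to bound $\mathbb{E}_x[|h(x)-\sigma_\alpha(x)|]=\int_{-\infty}^{\infty} g(x)\,p(x)\,dx$.

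Next I would split this integral at the threshold $T=1/\sqrt{\alpha}$. On the inner region $|x|\le T$ I bound the distortion by its peak value, $g(x)\le\tfrac12$, and the Gaussian density by its peak, $p(x)\le\tfrac{1}{\sigma\sqrt{2\pi}}$, so that $\int_{-T}^{T} g\,p\le \tfrac12\cdot 2T\cdot\tfrac{1}{\sigma\sqrt{2\pi}}=\tfrac{1}{\sigma\sqrt{2\pi\alpha}}$, which is exactly the first term of the bound. On the outer region $|x|>T$ I instead use the exponential decay of the distortion: since $\alpha|x|>\alpha T=\sqrt{\alpha}$, we have $g(x)\le e^{-\alpha|x|}\le e^{-\sqrt{\alpha}}$, so this constant factors out and leaves the tail mass $\int_{|x|>T} p(x)\,dx=\Pr(|x|>T)$. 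Applying the standard sub-Gaussian tail bound $\Pr(|x|>T)\le 2e^{-T^2/(2\sigma^2)}=2e^{-1/(2\alpha\sigma^2)}$ yields a contribution of $2e^{-(\sqrt{\alpha}+c/(\alpha\sigma^2))}$ with $c=\tfrac12$. Summing the two regions gives the claimed bound, and since $\tfrac{1}{\sigma\sqrt{2\pi\alpha}}\to0$ and $e^{-\sqrt{\alpha}}\to0$ while $e^{-c/(\alpha\sigma^2)}\to1$, the expected distortion vanishes as $\alpha\to\infty$.

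The choice $T=1/\sqrt{\alpha}$ is the crux: it is what simultaneously forces the inner term to be exactly $\tfrac{1}{\sigma\sqrt{2\pi\alpha}}$ and keeps the outer distortion-decay factor at $e^{-\sqrt{\alpha}}$, with the Gaussian tail then contributing precisely the $e^{-c/(\alpha\sigma^2)}$ correction. I expect the main obstacle to be the outer-region tail estimate for a general mean $\mu$: the clean bound $\Pr(|x|>T)\le 2e^{-T^2/(2\sigma^2)}$ is immediate for the centered case $\mu=0$, whereas for $\mu\ne 0$ one must argue that the worst case is $\mu\approx 0$ (when $|\mu|$ is large the Gaussian rarely visits the jump of $h$ at the origin, so $g$ is already negligible there) or else absorb the $\mu$-dependence into the constant $c$. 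Everything else is a routine application of peak-density and Chernoff-type tail bounds.
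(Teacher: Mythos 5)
Your proposal is correct and follows essentially the same route as the paper's proof: reduce to the centered case $\mu=0$, split the integral at the threshold $x_0 = 1/\sqrt{\alpha}$, bound the inner region by the peak distortion $1/2$ times the peak Gaussian density to get the $\frac{1}{\sigma\sqrt{2\pi\alpha}}$ term, and bound the outer region by the sigmoid's exponential decay at the threshold times the Gaussian tail to get the $2e^{-(\sqrt{\alpha}+c/\alpha\sigma^2)}$ term. Your explicit $c=1/2$ and your flagged concern about general $\mu$ are both consistent with (indeed slightly more careful than) the paper's treatment.
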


\begin{remark}
As has been noted in the machine vision literature \cite{zoran2012natural}, a natural model for an image database is that its images are generated from a mixture of Gaussian distributions. \fref{lem:approx} bounds the deviation of the sigmoid approximation from the non-smooth hash function \fref{eq:hashfn} under this model. 
\end{remark}

\subsection{Near-isometry and nearest neighbor preservation}
\label{sec:NNdelta}

Inspired by the definition of \emph{relative contrast} in \cite{he2012difficulty}, we define a more generalized measure of data separability to preserve $k$-NN that we call the {\em $k$-order gap} $\Delta_k := d(\vecx_0, \vecx_{k+1}) - d(\vecx_0, \vecx_k)$, where $\vecx_0$ is a query point and $\vecx_k$ and $\vecx_{k+1}$ are its $k^\text{th}$ and $k+1^\text{th}$ nearest neighbors, respectively. 
We formally show that if the data is highly separable ($\Delta_k$ is large), then the above approach preserves all $k$ nearest neighbors with high probability (see the Appendix for a proof and discussion).

\begin{thm} \label{thm:knn}
Assume that all the data points are independently generated from a mixture of Gaussian distribution i.e., $\vecx_i \sim \sum_{p=1}^P \pi_p \mathcal{N}(\mu_p,\Sigma_p)$. 
Let $\vecx_0 \in \mathbb{R}^N$ denote a query data point in the ambient space, and the other data points $\vecx_i$ be ordered so that $d(\vecx_0,\vecx_1) < d(\vecx_0,\vecx_2) < \ldots < d(\vecx_0,\vecx_Q)$.  Let $\delta$ denote the final value of the distortion parameter computed from any binary hashing algorithm, and let $c$ denote a positive constant. 
Then, if $\mathbb{E}_x[\Delta_k] \geq 2 \delta + \sqrt{\frac{1}{c}\log \frac{Qk}{\epsilon}}$, the binary hashing algorithm preserves all the $k$-nearest neighbors of a data point with probability at least $1-\epsilon$.
\end{thm}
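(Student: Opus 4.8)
The plan is to split the argument into a deterministic geometric step and a probabilistic concentration step. First I would exploit the only property that the phrase ``any binary hashing algorithm'' actually supplies, namely that the reported distortion $\delta$ certifies the two-sided near-isometry bound $d(\vecx_i,\vecx_j) - \delta \le \lambda d_H(\vech_i,\vech_j) \le d(\vecx_i,\vecx_j) + \delta$ for every relevant pair (including those involving the query $\vecx_0$) at the optimal scaling $\lambda$. Applying this with $\vecx_0$ against a true neighbor $\vecx_j$ ($j \le k$) and a non-neighbor $\vecx_i$ ($i \ge k+1$), the upper bound on $\lambda d_H(\vech_0,\vech_j)$ and the lower bound on $\lambda d_H(\vech_0,\vech_i)$ sandwich the two scaled Hamming distances, so any ranking swap forces $d(\vecx_0,\vecx_i) - d(\vecx_0,\vecx_j) \le 2\delta$. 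Because the ambient distances are sorted, $d(\vecx_0,\vecx_j) \le d(\vecx_0,\vecx_k)$ and $d(\vecx_0,\vecx_i) \ge d(\vecx_0,\vecx_{k+1})$, so the smallest gap over all neighbor/non-neighbor pairs is exactly $\Delta_k = d(\vecx_0,\vecx_{k+1}) - d(\vecx_0,\vecx_k)$. This yields the deterministic conclusion: whenever $\Delta_k \ge 2\delta$ the embedding preserves all $k$ nearest neighbors, and the entire problem reduces to lower-bounding $\Delta_k$ with high probability.

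Next I would convert the hypothesis into a lower-tail estimate. Setting $t := \sqrt{\tfrac1c\log\tfrac{Qk}{\epsilon}}$, the assumption $\mathbb{E}_x[\Delta_k] \ge 2\delta + t$ reduces the goal to showing $\Pr[\Delta_k < \mathbb{E}_x[\Delta_k] - t] \le \epsilon$. To get this I would use that, under the Gaussian-mixture model, conditioning on the component labels makes each $\vecx_i - \vecx_0$ Gaussian, so the distance $d(\vecx_0,\vecx_i) = \|\vecx_i - \vecx_0\|$ is a $1$-Lipschitz function of a Gaussian vector; Gaussian concentration then gives a sub-Gaussian tail $\Pr[\,|d(\vecx_0,\vecx_i) - \mathbb{E}_x d(\vecx_0,\vecx_i)| > s\,] \le 2e^{-c s^2}$, and since a mixture of sub-Gaussians is again sub-Gaussian this survives de-conditioning over the labels. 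I would then split $\{\Delta_k < \mathbb{E}_x[\Delta_k]-t\}$ into a deviation of $d(\vecx_0,\vecx_k)$ above its mean and of $d(\vecx_0,\vecx_{k+1})$ below its mean, bound these through the individual distance deviations, and take a union bound over the $O(Qk)$ candidate (neighbor, non-neighbor) rank-boundary pairs. This union cardinality is precisely what generates the $\log(Qk)$ inside $t$: choosing $t$ so that $Qk\cdot e^{-ct^2}=\epsilon$ recovers the stated threshold, and combining with the deterministic step bounds the failure probability by $\epsilon$.

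The hard part will be the concentration of the \emph{gap of consecutive order statistics} $\Delta_k$, because the identities of the $k$-th and $(k+1)$-th nearest points are themselves random and the two order statistics are dependent, so $\Delta_k$ is not a smoothly Lipschitz function of the sample and a naive bounded-differences argument would be lossy. I expect the cleanest route is to bound the lower tail of $\Delta_k$ by the probability that too many of the $Q$ i.i.d.\ distances crowd into a window of width $2\delta$ near the boundary radius, which is exactly where the per-pair sub-Gaussian estimates and the $O(Qk)$-fold union bound enter; managing the mixture structure (points drawn from different components have different mean distances to $\vecx_0$) by conditioning on the labels is a secondary technical wrinkle. The Gaussian-type data assumption here is the same one that underlies the sigmoid bound of \textbf{Lemma}~\ref{lem:approx}, so the concentration estimates are consistent with the regime in which $\delta$ is actually realized.
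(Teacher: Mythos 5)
Your overall strategy matches the paper's: certify that $\Delta_k \ge 2\delta$ deterministically forces the $k$ nearest neighbors to be preserved, then lower-bound $\Delta_k$ via sub-Gaussian concentration around $\mathbb{E}_x[\Delta_k]$, with the threshold $t=\sqrt{\tfrac{1}{c}\log\tfrac{Qk}{\epsilon}}$ absorbing a union bound. Your deterministic step is in fact cleaner than the paper's: you observe that a rank swap between any neighbor $\vecx_m$ ($m\le k$) and non-neighbor $\vecx_n$ ($n\ge k+1$) forces $d(\vecx_0,\vecx_n)-d(\vecx_0,\vecx_m)\le 2\delta$ and hence $\Delta_k<2\delta$, so the failure event is contained in $\{\Delta_k<2\delta\}$ with no multiplicative factor. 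The paper instead union-bounds over the $k(Q-k)$ swap events $e_{m,n}$ and asserts that $e_{k,k+1}$ is the most probable one, which is exactly where its factor $kQ$ enters; your argument renders that step unnecessary (and would in principle establish the theorem under the weaker hypothesis with $\log(1/\epsilon)$ in place of $\log(Qk/\epsilon)$, which still implies the stated version).

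The genuine gap is the concentration step, which you explicitly leave open. The paper's route (the auxiliary lemma in its appendix) is short: concatenate $\vecy=[\vecx_0^T,\vecx_k^T,\vecx_{k+1}^T]^T$, note that $\vecy$ is sub-Gaussian under the mixture model, write $\Delta_k(\vecy)$ as a difference of Euclidean norms of linear images of $\vecy$ so that it is Lipschitz with constant at most $2$, and invoke Talagrand's concentration inequality to obtain $P(\Delta_k-\mathbb{E}_x[\Delta_k]<-t)\le e^{-ct^2}$. Your proposed alternative --- bounding the lower tail of $\Delta_k$ by the probability that many of the $Q$ distances crowd into a window of width $2\delta$ near the rank-$k$ radius --- is a different and substantially harder road, and you do not carry it out; as written, the proposal never actually produces the inequality $P(\Delta_k<\mathbb{E}_x[\Delta_k]-t)\le\epsilon$ on which everything hinges. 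Your concern about the order statistics is legitimate: the points occupying ranks $k$ and $k+1$ are data-dependent, so $(\vecx_0,\vecx_k,\vecx_{k+1})$ is not an independent triple from the mixture, and a Lipschitz-concentration argument applied to it is not strictly justified. But note that the paper's own proof simply makes that simplification; to complete your version you must either accept it as the paper does, or genuinely execute the crowding argument you sketch, which is a nontrivial additional piece of work.
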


\subsection{The NIBH algorithm}
\label{sec:admm}
We now develop an algorithm to solve the optimization problem (P). 
We apply the alternating direction method of multipliers (ADMM) framework \cite{boydadmm} to construct an efficient algorithm to find a (possibly local) optimal solution of (P).
Note that (P) is non-convex, and therefore no standard optimization method is guaranteed to converge to a globally optimal solution in general.
We introduce an auxiliary variable $\vecu$ to arrive at the equivalent problem:
\begin{align} \label{eq:eqprob} 
\underset{\bW,\vecu,\lambda>0}{\text{minimize}}\,\,\, \| \vecu  \|_\infty \quad \text{subject to} \quad \vecu = \lambda \vecv(\bW) - \vecc. 
\end{align}
The augmented Lagrangian form of this problem can be written as $\underset{\bW,\vecu,\lambda>0}{\text{minimize}}\,\,\, \| \vecu \|_\infty + \frac{\rho}{2} \normtwo{\vecu- \lambda \vecv(\bW) + \vecc + \vecy}^2,$
where $\rho$ is the scaling parameter in ADMM and $\vecy \in \mathbb{R}^{{Q(Q-1)}/{2}}$ is the Lagrange multiplier vector. 
The NIBH algorithm proceeds as follows. 
First, the variables $\bW$, $\lambda$, $\vecu$, and Lagrange multipliers $\vecy$ are initialized randomly. 
Then, at each iteration, we optimize over each of the variables ${\bW, \vecu,}$ and ${\lambda}$ while holding the other variables fixed. 
More specifically, in iteration $\ell$, we perform the following four steps until convergence:
\begin{itemize}[leftmargin=*]
\fussy
\item{\emph{Optimize over} $\bW$} via 
$\bW^{(\ell+1)}\!\! \leftarrow \!\! \underset{\bW}{\text{arg min}} \frac{1}{2} \sum_{(i,j)\in\Omega} ( u_{ij}^{(\ell)} - \lambda^{(\ell)} \| \frac{1}{1+e^{-\bW \vecx_i}} - \frac{1}{1+e^{-\bW \vecx_j}} \|_2^2 +\normtwo{\vecx_i - \vecx_j}^2 - y_{ij}^{(\ell)} )^2$,
where $\lambda^{(\ell)}$ denotes the value of $\lambda$ in the $\ell^\text{th}$ iteration. 
We also use $u_{ij}^{(\ell)}$ and $y_{ij}^{(\ell)}$ to denote the entries in $\vecu^{(\ell)}$ and $\vecy^{(\ell)}$ that correspond to the pair $\vecx_i$ and $\vecx_j$ in the dataset $\mathcal{X}$. 
We show in our experiments below that using the accelerated first-order gradient descent algorithm \cite{nest} to solve this subproblem results in good empirical convergence performance (see the Appendix). 
\sloppy

\item{\emph{Optimize over} $\vecu$} while holding the other variables fixed; it corresponds to solving the proximal problem of the $\ell_\infty$-norm $\vecu^{(\ell+1)} \!\!\leftarrow \!\! \underset{\vecu}{\text{arg min}} \, \| \vecu  \|_\infty  \! + \! \frac{\rho}{2} \| \vecu \! - \! \lambda^{(\ell)} \! \vecv^{(\ell+1)} \! + \vecc + \! \vecy^{(\ell)} \|_2^2$. 
We use the low-cost algorithm described in \cite{studertom} to perform the proximal operator update. 

\fussy
\item{\emph{Optimize over} $\lambda$} while holding the other variables fixed; it corresponds to a positive least squares problem, where $\lambda$ is updated as $\lambda^{(\ell+1)}\!\! \leftarrow \!\! \underset{\lambda>0}{\text{arg min}} \, \frac{1}{2} \| \vecu^{(\ell+1)} - \lambda \vecv^{(\ell+1)} + \vecc + \vecy^{(\ell)} \|_2^2. $
We perform this update using the non-negative least squares algorithm \cite{fcnnls}.

\item{\emph{Update} $\vecy$} via $\vecy^{(\ell+1)} \!\!  \leftarrow \!\!  \vecy^{(\ell)} + \eta (\vecu^{(\ell+1)} - \lambda^{(\ell+1)}\vecv^{(\ell+1)}+\vecc)$,
where the parameter $\eta$ controls the dual update step size. \fussy

\end{itemize} 
\subsection{Accelerated NIBH for large-scale datasets}
\label{sec:nibhcg}

The ADMM-based NIBH algorithm is efficient for small-scale datasets (e.g., for secant sets of size $|\mathcal{S}(\mathcal{X})| < 5000$ or so). 
However, the memory requirement of NIBH is quadratic in $|\mathcal{X}|$, which would be problematic for applications involving large-scale numbers of data points and secants. 
In response, we develop an algorithm that approximately solves $(\text{P})$ while scaling very well to large-scale problems. 
The key idea comes from classical results in optimization theory related to {\em column generation} (CG) \cite{dantzig1960decomposition,numax}. 

The optimization problem \fref{eq:eqprob} is an $\ell_{\infty}$-norm minimization problem with an equality constraint on each secant. 
The Karush-Kuhn-Tucker (KKT) condition for this problem states that, if strong duality holds, then the optimal solution is entirely specified by a (typically very small) portion of the constraints. 
Intuitively, the secants corresponding to these constraints are the pairwise distances that are harder to preserve in the low-dimensional Hamming space. 
We call the set of such secants the \emph{active} set. 
In order to solve $(\text{P})$, it suffices to find the active secants and solve NIBH with a much smaller number of active constraints.
To leverage the idea of the active set, we iteratively run NIBH on a small subset of all the secants that violate the near-isometry condition, as detailed below:

\begin{itemize}

\item 
Solve $(\text{P})$ with a small random subset $\mathcal{S}_0$ of all the secants $\mathcal{S}(\mathcal{X})$ using NIBH to obtain $\widehat{\bW}$, $\hat{\delta}$, and $\hat{\lambda}$, initial estimates of the parameters. Identify the active set $\mathcal{S}_\text{a}$. Fix $\lambda=\hat{\lambda}$ for the rest of the algorithm.
\item 
Randomly select a new subset $\mathcal{S}_\text{v} \subset \mathcal{S}$ of secants that violate the near isometry condition using the current estimates of $\widehat{\bW}$, $\hat{\delta}$, and $\hat{\lambda}$. Then, form an augmented secant set $\mathcal{S}_\text{aug} = \mathcal{S}_\text{a} \cup  \mathcal{S}_\text{v}$. 
\item 
Solve $(\text{P})$ with the secants in the set $\mathcal{S}_\text{aug}$ using the NIBH algorithm.

\end{itemize}

We dub this approach {\em NIBH-CG}.
NIBH-CG iterates over the above steps until no new violating secants are added to the active set.
Since the algorithm searches over all the secants for violating secants in each iteration before terminating, NIBH-CG ensures that all of the constraints are satisfied when it terminates. 
A key benefit of NIBH-CG is that only the set of active secants (and not all secants) needs to be stored in memory. 
This benefit leads to significant improvements in terms of memory complexity over competing algorithms, since the set of all secants quickly becomes large-scale and can exceed the system memory capacity in large-scale applications.


\section{Experiments}
\label{sec:experiments}

\begin{figure*}[t]
\vspace{-0.1cm}
\centering
\includegraphics[width=1\textwidth]{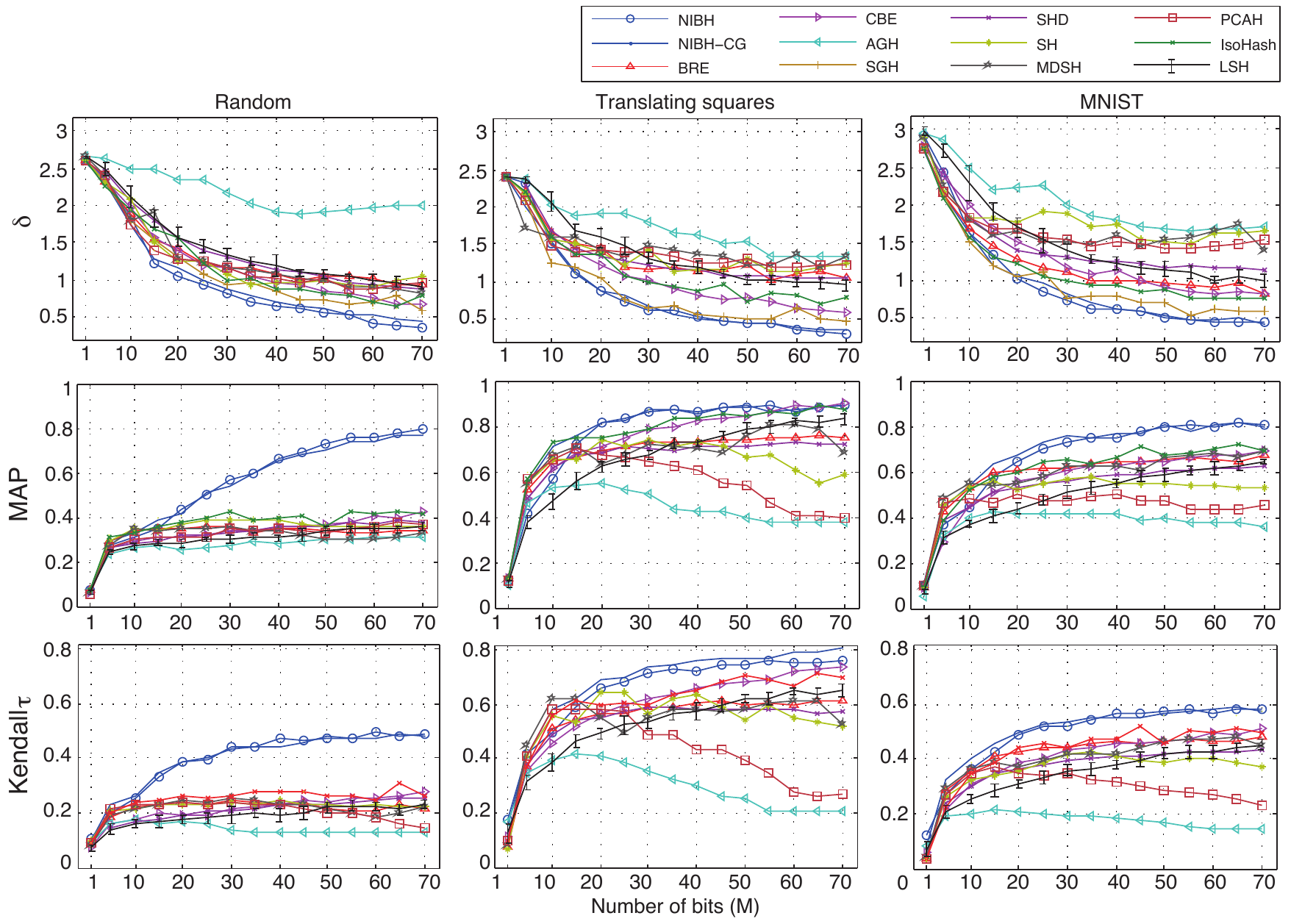}\label{fig:str}
\vspace{-.5cm}
\caption{Comparison of the NIBH and NIBH-CG algorithms against several baseline binary hashing algorithms using three small-scale datasets with 4950 secants ($Q$ = 100). The performance of NIBH-CG closely follows that of NIBH, and both outperform all of the other algorithms in terms of the maximum distortion $\delta$ (superior distance preservation), mean average precision $\text{MAP}$ of training samples (superior nearest neighbor preservation), and Kendall $\tau$ rank correlation coefficient (superior ranking preservation).}
\label{fig:str}
\vspace{-.2cm}
\end{figure*}

\label{fig:deltatest}

In this section, we validate the NIBH and NIBH-CG algorithms via experiments using a range of synthetic and real-world datasets, including three small-scale, three medium-scale, and one large-scale datasets with respect to three metrics. 
We compare NIBH against ten state-of-the-art binary hashing algorithms, including
binary reconstructive embedding (BRE) \cite{kulis2009learning},
spectral hashing (SH) \cite{weiss2009spectral},
anchor graph hashing (AGH) \cite{liu2011hashing}, 
multidimensional spectral hashing (MDSH) \cite{weiss2012multidimensional},
scalable graph hashing (SGH) \cite{jiang2015scalable},
PCA hashing (PCAH) \cite{jolliffe2002principal},
isotropic hashing (IsoHash) \cite{kong2012isotropic},
spherical Hamming distance hashing (SHD) \cite{heo2012spherical}, 
circulant binary embedding (CBE) \cite{yu2014circulant},
and
locality-sensitive hashing (LSH) \cite{indyk1998approximate}.

\begin{figure*}[tp]
\centering
\includegraphics[width=0.8\textwidth]{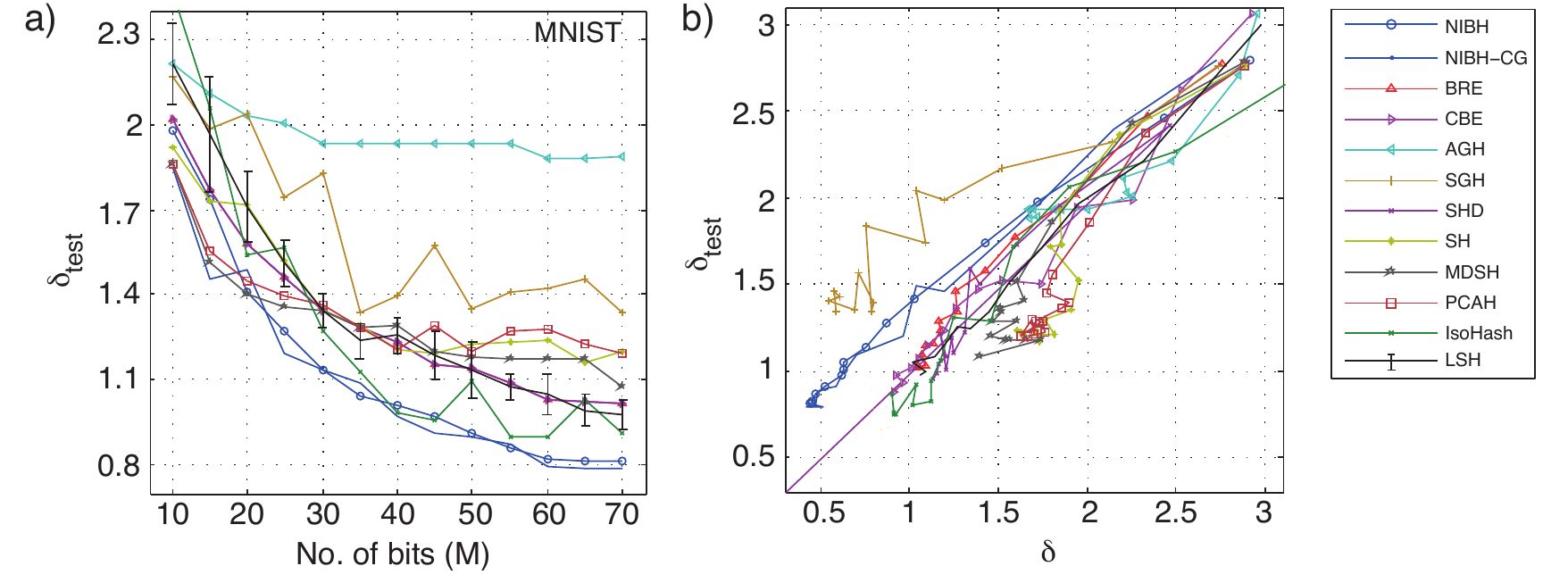}\label{fig:deltatest}
\caption{Comparison of NIBH and NIBH-CG against several state-of-the-art binary hashing algorithms in preserving isometry on MNIST data. (a) NIBH-CG outperforms the other algorithms in minimizes the isometry constant on unseen data $\delta_{\text{test}}$. (b)  . NIBH and NIBH-CG provide better isometry guarantee with a small sacrifice to universality.}
\end{figure*}

\subsection{Performance metrics and datasets}
\label{sec:perfmetric}

We compare the algorithms using the following three metrics: 

{\emph{Maximum distortion}} $\delta= \underset{\lambda>0}{\text{inf}} ||\lambda \hat{\vecv} - \vecc||_{\infty}$, where the vector $\hat{\vecv}$ contains the pairwise Hamming distances between the learned binary codes. 
This metric quantifies the distance preservation among all of the pairwise distances after projecting the training data in the ambient space into binary codes. 
We also define the maximum distortion for unseen test data $\delta_{\text{test}}$, which measures the distance preservation on a hold-out test dataset using the hash function learned from the training dataset. 

{\emph{Mean average precision}} (MAP) for near-neighbor preservation in the Hamming space. MAP is computed by first finding the set of $k$-nearest neighbors for each query point on a hold-out test data in the ambient space $\mathcal{L}^k$ and the corresponding set $\mathcal{L}_\text{H}^k$ in the Hamming space and then calculating the average precision $\text{AP} = |\mathcal{L}^k \cap \mathcal{L}_\text{H}^k |/k$. 
We then report MAP by calculating the mean value of $\text{AP}$ across all data points. 

{\emph{Kendall $\tau$ ranking correlation coefficient}}. We first rank the set of $k$-nearest neighbors for each data point by increasing distance in the ambient space as $\mathcal{T}(\mathcal{L}^k)$ and in the Hamming space as $\mathcal{T}(\mathcal{L}_\text{H}^k)$.
The Kendall $\tau$ correlation coefficient is a scalar $\tau\in[-1,1]$ that measures the similarity between the two ranked sets $\mathcal{T}(\mathcal{L}^k)$ and $\mathcal{T}(\mathcal{L}_\text{H}^k)$ \cite{kendall1938new}. 
The value of $\tau$ increases as the similarity between the two rankings increases and reaches the maximum value of $\tau=1$ when they are identical. 
We report the average value of $\tau$ across all data points in the training dataset.

To compare the algorithms, we use the following standard datasets from computer vision:  
\emph{Random} consists of independently drawn random vectors in $\mathbb{R}^{100}$ from a multivariate Gaussian distribution with zero mean and identity covariance matrix. 
 \emph{Translating squares} is a synthetic dataset consisting of $10 \times 10$ images that are translations of a $3\times 3$ white square on black background \cite{numax}.
\emph{MNIST} is a collection of 60,000 $28 \times 28$ greyscale images of handwritten digits  
\cite{lecun1998mnist}.
\emph{Photo-Tourism} is a corpus of approximately 300,000 image patches, represented using scale-invariant feature transform (SIFT) features \cite{lowe2004distinctive} in $\mathbb{R}^{128}$ \cite{snavely2006photo}.
\emph{LabelMe} is a collection of over 20,000 images represented using GIST descriptors in $\mathbb{R}^{512}$ \cite{torralba2008small}. 
\emph{Peekaboom} is a collection of 60,000 images represented using GIST descriptors in $\mathbb{R}^{512}$ \cite{torralba2008small}.
Following the experimental approaches of the hashing literature \cite{kulis2009learning,norouzi2011minimal}, we pre-process the data by subtracting the mean and then normalizing all points to lie on the unit sphere. 
\vspace{-0cm}
\subsection{Small- and medium-scale experiments}
\label{sec:smalld}

We start by evaluating the performance of NIBH and NIBH-CG using a small-scale subset of the first three datasets.
Small-scale datasets enable us to compare the performance of NIBH vs.\ NIBH-CG to verify that they perform similarly.
Also they help us assess the asymptotic behavior of algorithms in preserving isometry since the total of number of secants are small compare to the bit budget in compact binary codes.

\paragraph{Experimental setup.}
We randomly select $Q=$ 100 data points from the \emph{Random}, \emph{Translating squares}, and \emph{MNIST} datasets.
We then apply the NIBH, NIBH-CG, and all the baseline algorithms on each dataset for different target binary code word lengths $M$ from 1 to 70 bits. 
We set the NIBH and NIBH-CG algorithm parameters to the common choice of $\rho=1$ and $\eta=1.6$. 
To generate hash function of length $M$ for LSH, we draw $M$ random vectors from a Gaussian distribution with zero mean and an identity covariance matrix. We use the same random vectors to initialize NIBH and other baseline algorithms.
In the near-neighbor preservation experiments, to show the direct advantage of minimizing $\ell_{\infty}$-norm over $\ell_2$-norm, we followed the exact procedure described in BRE \cite{kulis2009learning} to select the training secants, i.e., we apply the NIBH algorithm on only the lowest $5\%$ of the pairwise distances (which are set to zero as in BRE) combined with the highest $2\%$ of the pairwise distances.

We follow the \emph{continuation} approach \cite{wen2010fast} to set the value of $\alpha$. 
We start with a small value of $\alpha$, (e.g., $\alpha=1$) to avoid becoming stuck in bad local minima, and then gradually increase $\alpha$ as the algorithm proceeds. 
As the algorithm gets closer to convergence and has obtained a reasonably good estimate of the parameters $\bW$ and $\lambda$, we set $\alpha = 10$, which enforces a good approximation of the sign function (see Lemma \ref{lem:approx}). 

\paragraph{Results.}
\vspace{-0.2cm}

The plots in the top row of Figure~\ref{fig:str} illustrate the value of the distortion parameter $\delta$ as a function of the number of projections (bits) $M$. 
The performance of NIBH and NIBH-CG closely follow each other, indicating that NIBH-CG is a good approximation to NIBH.
Both NIBH and NIBH-CG outperform the other baseline algorithms in terms of the distortion parameter $\delta$. 
Among these baselines, LSH has the lowest isometry performance since random projections are oblivious to the intrinsic geometry of the training dataset.
To achieve $\delta=$ 1, NIBH(-CG) requires $60\%$ fewer bits $M$ than CBE and BRE.
NIBH(-CG) also achieves better isometry performance asymptotically, i.e., up to $\delta \approx$ 0.5, given a sufficient number of bits ($M\geq$ 70), while for most of the other algorithms the performance plateaus after $\delta=$ 1.
NIBH's superior near-isometry performance extends well to unseen data. 
Figure \ref{fig:deltatest}(a) demonstrates that NIBH achieves the lowest isometry constant on a test dataset $\delta_{\text{test}}$ compared to other hashing algorithms.
Figure \ref{fig:deltatest}(b) further suggests that NIBH's superior isometry performance comes with smallest sacrifice to the universality of the hash functions.

\begin{figure*}[tp]
\vspace{-0.1cm}
\centering
\includegraphics[width=0.9\textwidth]{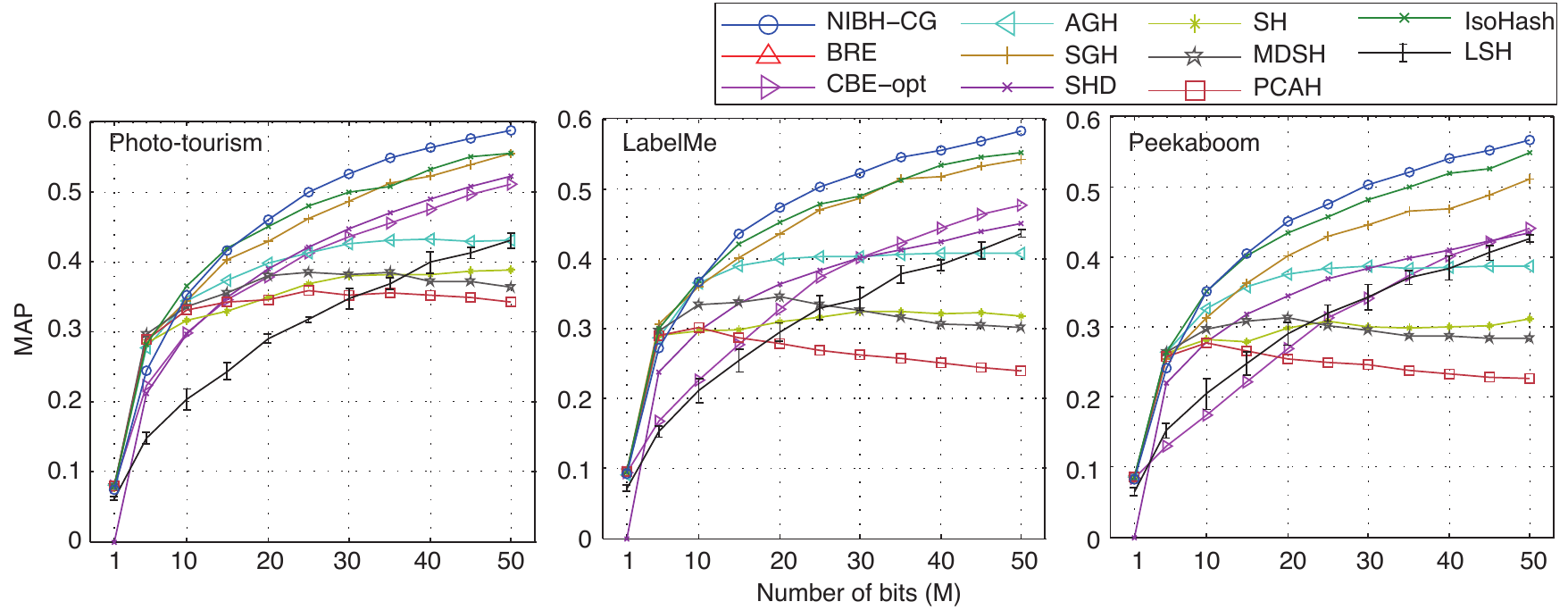}\label{fig:btr}
\caption{Hamming ranking performance comparison on three medium-scale datasets ($Q=$ 1000). The top-50 neighbors are used to report MAP over a test data of same size.}
\label{fig:btr}
\vspace{-0.5cm}
\end{figure*} 

The plots in the middle and bottom row of Figure~\ref{fig:str} shows the average precision for retrieving training data and the Kendall $\tau$ correlation coefficient respectively, as a function of the number of bits $M$.
We see that NIBH preserves a higher percentage of nearest neighbors compared to other baseline algorithms as $M$ increases with better average ranking  among $k=$ 10-nearest neighbors.

Now we showcase the performance of NIBH-CG on three medium-scale, real-world datasets used in \cite{kulis2009learning,norouzi2011minimal}, including \emph{Photo-tourism}, \emph{LabelMe}, and \emph{Peekaboom} for the popular machine learning task of {\em data retrieval}.
From each dataset we randomly select $Q=$ 1000 training points, following the setup in BRE \cite{kulis2009learning}, and use them to train NIBH-CG and the other baseline algorithms.
We then randomly select a separate set of $Q=$ 1000 data points and use it to test the performance of NIBH-CG and other baseline algorithms in terms MAP with $k = 50$.
Figure~\ref{fig:btr} illustrates the performance of NIBH-CG on these datasets. 
NIBH-CG outperforms all the baseline algorithms with large margins in Hamming ranking performance in term of MAP with top-50 near-neighbors.

\subsection{Large-scale experiments}
\label{sec:larged}

We now demonstrate that NIBH-CG scales well to large-scale datasets.
We use the full \emph{MNIST} dataset with 60,000 training images and augment it with three rotated versions of each image (rotations of 90$^{\circ}$, 180$^{\circ}$, and 270$^{\circ}$) to create a larger dataset with $Q=$ 240,000 data points. 
Next, we construct 4 training sets with 1,000, 10,000, 100,000, and 240,000 images out of this large set.
We train all algorithms with $M =$ 30 bits and compare their performance on a test set of 10,000 images. 
BRE fails to execute on a standard desktop PC with 12 GB of RAM for training sets with more than 100,000 points due to the size of the secant set $|\mathcal{X}|$.
The results for all algorithms are given in Table~\ref{tab:largescale}; we tabulate their performance in terms of MAP for the top-500 neighbors. 
The performance of NIBH-CG is significantly better than the baseline algorithms and, moreover, improves as the size of the training set grows. 
This emphasizes that NIBH-CG excels at large-scaled problems thanks to its very small memory requirement; indeed, the memory requirement of NIBH-CG is linear in the number of \emph{active} secants rather than the total number of secants.

\vspace{-0.4cm}
\section{Discussion}

We have demonstrated that the worst-case, $\ell_{\infty}$-norm-based near-isometric binary hashing (NIBH) algorithm is superior to a wide range of algorithms based on the more traditional average-case, $\ell_2$-norm. 
Despite its non-convexity and non-smoothness, NIBH admits an efficient optimization algorithm that converges to a high-performing local minimum.
Moreover, NIBH-CG, the accelerated version of NIBH, provides significant memory advantages over existing algorithms. 
Our exhaustive experiments with six datasets, three metrics, and ten algorithms have shown that NIBH outperforms all of the state-of-the-art data-dependent hashing algorithms. 
The results in this paper provide a strong motivation for exploring $\ell_{\infty}$-norm formulations in binary hashing.  

\begin{table*}
\centering
\caption{Comparison of NIBH-CG against several baseline binary hashing algorithms on large-scale  \emph{MNIST} datasets with over 28 billion secants $|\mathcal{S}(\mathcal{X})|$.  We tabulate the Hamming ranking performance in terms of mean average precision MAP for different sizes of the dataset. All training times are in seconds.}
\scalebox{0.8}{
\label{tab:largescale}
\begin{tabular}{|c|c|c|c|c|c|}
  \hline  $M = 30$ \text{bits}  & \multicolumn{4}{c|}{  MAP / Top-500 ( \emph{MNIST + rotations})} &  \multicolumn{1}{c|}{ Training time } \\
  \hline  {Training size} $Q$ & $1K$ & $10K$ & $100K$ & $240K$ & $240K$\\
  \hline  {Secant size} $|\mathcal{S}(\mathcal{X})|$ &  $500K$  & $50M$ & $5B$ & $28B$ & $28B$ \\
  \hline 
  \hline \textbf{NIBH-CG} & $\bf{52.79}$ ($\pm 0.15$)  & $\bf{54.69}$ ($\pm 0.18$) & $\bf{54.93}$ ($\pm 0.23$) &  $\bf{55.52}$ ($\pm 0.11$) & 541.43 \\
  \hline BRE   & $48.33$ ($\pm 0.65$) & $50.67$($\pm 0.33$) & -- & -- &  $18685.51$ \\
  \hline CBE  & $38.70$ ($\pm 1.18$) & $38.12$ ($\pm 1.34$) & $38.50$ ($\pm 2.05$) & $38.53$ ($\pm 0.83$) & 68.94\\
  \hline SPH  & $44.33$ ($\pm 0.74$) & $44.24$ ($\pm 0.61$) & $44.37$ ($\pm 0.71$) & $44.32$ ($\pm 0.63$) & 184.46  \\
  \hline SH  & $40.12$ ($\pm 0.00$) & $39.37$ ($\pm 0.00$) & $38.79$ ($\pm 0.00$) & $38.26$ ($\pm 0.00$) & 3.05 \\
  \hline MDSH  & $41.06$ ($\pm 0.00$) & $41.23$ ($\pm 0.00$) & $40.80$ ($\pm 0.00$) & $40.39$ ($\pm 0.00$) & 15.00 \\
  \hline AGH  & $45.81$ ($\pm 0.34$) & $47.78$ ($\pm$ 0.38) & $47.69$ ($\pm 0.41$) & $47.38$ ($\pm 0.32$) & 4.49 \\
  \hline SGH  & $51.32$ ($\pm 0.07$) & $51.33$ ($\pm 0.20$) & $51.01$ ($\pm 0.23$) & $50.66$ ($\pm 0.76$) & $5.89$ \\
  \hline PCAH  & $39.90$ ($\pm 0.00$) & $38.53$ ($\pm 0.00$) & $38.81$ ($\pm 0.00$) & $37.50$ ($\pm 0.00$) & $0.08$\\
  \hline IsoHash  & $50.91$ ($\pm 0.00 $) & $50.90$ ($\pm 0.00 $) & $50.72$ ($\pm 0.00 $) & $50.55$ ($\pm 0.00 $) & $2.82$ \\
  \hline LSH  & $33.69$ ($\pm 0.94 $) & $33.69$ ($\pm 0.94 $) & $33.69$ ($\pm 0.94 $) & $33.69$ ($\pm 0.94 $) & $2.29\times 10^{-4}$  \\
  \hline
\end{tabular}}
\end{table*}


\bibliography{icml2016}
\bibliographystyle{icml2016}

\newpage

\appendix

\section{Appendix}
\label{sec:appendix}

In the appendix, we prove \fref{lem:approx} and \fref{thm:knn} on the performance of NIBH on $k$-nearest neighbor preservation. Finally, we include additional numerical simulation results and discussions on the empirical convergence of the NIBH algorithm.

\section*{Proof of \fref{lem:approx}}
\begin{lem} \label{lem:approx}
Let $x$ be a Gaussian random variable as $x \sim \mathcal{N}(\mu, \sigma^2)$. Define the distortion of the sigmoid approximation at $x$ as $\abs{h(x)-\sigma_\alpha(x)}$. Then, the expected distortion is bounded as
\begin{align*} 
\mathbb{E}_x[ \abs{h(x)-\sigma_\alpha(x)} ] \leq \frac{1}{\sigma\sqrt{2\pi\alpha}} + 2 e^{-(\sqrt{\alpha} + c /\alpha \sigma^2)},
\end{align*} 
where $c$ is a positive constant. As $\alpha$ goes to infinity, the expected distortion goes to $0$.
\end{lem}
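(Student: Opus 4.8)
The plan is to first reduce the two-function distortion to a single clean expression, then bound its Gaussian expectation by splitting the real line into a ``core'' and a ``tail'' region at a threshold chosen to balance the two terms in the claimed bound.

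First I would evaluate $h(x)-\sigma_\alpha(x)$ by cases. Since $h(x)=1$ for $x>0$ and $h(x)=0$ for $x<0$, a direct computation gives $|h(x)-\sigma_\alpha(x)| = \frac{1}{1+e^{\alpha|x|}}$ in both cases; call this $g_\alpha(x)$. This function is even, strictly decreasing in $|x|$, equals $\tfrac12$ at $x=0$, and admits the two elementary bounds $g_\alpha(x)\le \tfrac12$ and $g_\alpha(x)\le e^{-\alpha|x|}$. These two bounds are precisely what drive the two terms of the final estimate.

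Next I split the expectation at the threshold $t=1/\sqrt{\alpha}$,
\[
\mathbb{E}_x[g_\alpha(x)] = \int_{|x|\le t} g_\alpha(x)\,p(x)\,dx + \int_{|x|> t} g_\alpha(x)\,p(x)\,dx,
\]
where $p$ is the $\mathcal{N}(\mu,\sigma^2)$ density. On the core $\{|x|\le t\}$ I use $g_\alpha\le\tfrac12$ together with the uniform density bound $p(x)\le p_{\max}=\frac{1}{\sigma\sqrt{2\pi}}$, so this piece is at most $\tfrac12\cdot p_{\max}\cdot(2t)=t\,p_{\max}=\frac{1}{\sigma\sqrt{2\pi\alpha}}$, reproducing the first term exactly (and explaining the specific choice $t=1/\sqrt\alpha$). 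On the tail $\{|x|>t\}$ I use $g_\alpha(x)\le e^{-\alpha|x|}\le e^{-\alpha t}=e^{-\sqrt{\alpha}}$, pulling this constant out to leave $e^{-\sqrt\alpha}\,\Pr(|x|>t)$.

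The remaining work, and the main obstacle, is to bound the tail probability $\Pr(|x|>t)$ so as to recover the precise factor $2e^{-c/(\alpha\sigma^2)}$. Here I would apply a Gaussian tail (Chernoff-type) bound to $\Pr(|x|>1/\sqrt\alpha)$; since $t^2=1/\alpha$, the governing exponent is of order $-1/(\alpha\sigma^2)$, which yields the form $2e^{-c/(\alpha\sigma^2)}$ for a suitable positive constant $c$, with the factor $2$ coming from the union over the two tails $\Pr(x>t)+\Pr(x<-t)$. The delicate point is that $t=1/\sqrt\alpha$ shrinks as $\alpha$ grows and interacts with a nonzero mean $\mu$, so a clean tail bound applies directly only in the regime $t>|\mu|$; outside it one must absorb the $\mu$-dependence into the constant $c$ (or simply use $\Pr(|x|>t)\le 1\le 2$), which is why $c$ is left unspecified. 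Combining the two pieces gives $\mathbb{E}_x[g_\alpha(x)]\le \frac{1}{\sigma\sqrt{2\pi\alpha}}+2e^{-(\sqrt\alpha+c/\alpha\sigma^2)}$, and letting $\alpha\to\infty$ the first term vanishes like $\alpha^{-1/2}$ and the second like $e^{-\sqrt\alpha}$, so the expected distortion tends to $0$ as claimed.
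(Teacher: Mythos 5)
Your proof is correct and follows essentially the same route as the paper's: split the expectation at the threshold $x_0=1/\sqrt{\alpha}$, bound the distortion by $\tfrac12$ on the core (using the peak density $\tfrac{1}{\sigma\sqrt{2\pi}}$, which yields the $\tfrac{1}{\sigma\sqrt{2\pi\alpha}}$ term) and by $e^{-\alpha x_0}=e^{-\sqrt{\alpha}}$ times a Gaussian tail probability outside, then invoke Gaussian concentration for the tail. The only real difference is in handling the mean: the paper first reduces to $\mu=0$ by a worst-case/symmetry argument and folds to a one-sided integral, whereas you keep general $\mu$, use a two-sided tail, and (rightly) note that the $\mu$-dependence must be absorbed into the unspecified constant $c$.
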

\sloppy
\begin{proof}
It is easy to see that the distortion $\abs{h(x)-\sigma_\alpha(x)}$ occurs at $x = 0$. Therefore, among different values of $\mu$, $\mu = 0$ gives the largest distortion since the density of $x$ peaks at $x = 0$. Therefore, we bound the distortion at setting $\mu = 0$, which is an upper bound of the distortion when $\mu \neq 0$. 
By definition (1) in the main text, $h(x)$ can be written as
\begin{align*}  
h(x) = \left \{ \begin{array}{ll}
1 &\text{if} \,\, x \geq 0 \\
0 & \text{otherwise}.
\end{array} \right.
\end{align*} 
When $x \sim \mathcal{N}(0,\sigma^2)$, we have
\begin{align}
\notag & \mathbb{E}_x[ \abs{h(x)-\sigma_\alpha(x)} ]  = \int_{-\infty}^\infty \abs{h(x)-\sigma_\alpha(x)} \mathcal{N}(x;0,\sigma^2) \mathrm{d}x \\
\label{eq:sym} & = 2 \int_0^\infty (h(x)-\sigma_\alpha(x)) \mathcal{N}(x;0,\sigma^2) \mathrm{d}x \\
\notag & = 2 \int_0^{x_0} (h(x)-\sigma_\alpha(x)) \mathcal{N}(x;0,\sigma^2) \mathrm{d}x \\
\notag & \quad + 2 \int_{x_0}^{\infty} (h(x)-\sigma_\alpha(x)) \mathcal{N}(x;0,\sigma^2) \mathrm{d}x \\
\label{eq:leq} & \leq \! 2 \! \int_0^{x_0} \frac{1}{2} \mathcal{N}(x;0,\sigma^2) \mathrm{d}x \! + \! 2 \! \int_{x_0}^{\infty} \frac{1}{1+e^{\alpha x_0}} \mathcal{N}(x;0,\sigma^2) \mathrm{d}x \\ 
\label{eq:apr} & \leq  \frac{x_0}{\sqrt{2\pi} \sigma} + 2 \frac{e^{-c x_0^2/\sigma^2}}{1+e^{\alpha x_0}} \\ 
\label{eq:den} & \leq \frac{1}{\sigma \sqrt{2\pi\alpha}} + 2 e^{-(\sqrt{\alpha} + c /\alpha\sigma^2)} ,
\end{align}
when we set $x_0 = \frac{1}{\sqrt{\alpha}}$ and $c$ is a positive constant. In \fref{eq:sym}, we used the fact that $\sigma_\alpha(x)$ and $h(x)$ are symmetric with respect to the point $(0,\frac{1}{2})$. \fref{eq:leq} is given by the properties of the sigmoid function, \fref{eq:apr} is given by the Gaussian concentration inequality \cite{talagrand}, and \fref{eq:den} is given by the inequality $1/(1+e^{\alpha x_0}) \leq e^{-\alpha x_0}$. The fact that $\mathbb{E}_x[ \abs{h(x)-\sigma_\alpha(x)} ] \rightarrow 0$ as $\alpha \rightarrow \infty$ is obvious from the bound above. 
\end{proof}

\section*{Proof of \fref{thm:knn}}

\begin{thm} \label{thm:knn}
Assume that all the data points are independently generated from a mixture of Gaussian distribution, i.e., $\vecx_i \sim \sum_{p=1}^P \pi_p \mathcal{N}(\mu_p,\Sigma_p)$. 
Let $\vecx_0 \in \mathbb{R}^N$ denote a query data point in the ambient space, and the other data points $\vecx_i$ be ordered so that $d(\vecx_0,\vecx_1) < d(\vecx_0,\vecx_2) < \ldots < d(\vecx_0,\vecx_Q)$.  Let $\delta$ denote the final value of the distortion parameter computed from any binary hashing algorithm, and let $c$ denote a positive constant. 
Then, if $\mathbb{E}_x[\Delta_k] \geq 2 \delta + \sqrt{\frac{1}{c}\log \frac{Qk}{\epsilon}}$, the binary hashing algorithm preserves the $k$-nearest neighbors of a point with probability at least $1-\epsilon$.
\end{thm}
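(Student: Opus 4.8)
The plan is to split the argument into a deterministic reduction and a probabilistic concentration step. First I would observe that preserving the $k$-nearest neighbors of $\vecx_0$ is equivalent to separating, in the Hamming space, the $k$ closest points from all the remaining points; because the data are indexed by increasing ambient distance, the binding comparison is between the $k$-th neighbor $\vecx_k$ and the first outsider $\vecx_{k+1}$. Applying the near-isometry guarantee (the final distortion $\delta$ bounds $\abs{\lambda d_H(\vech_i,\vech_j) - d(\vecx_i,\vecx_j)}$ on every relevant pair) to the pairs $(0,i)$ and $(0,j)$, I would chain the inequalities
\begin{align*}
\lambda d_H(\vech_0,\vech_i) \leq d(\vecx_0,\vecx_i) + \delta \leq d(\vecx_0,\vecx_k) + \delta < d(\vecx_0,\vecx_{k+1}) - \delta \leq \lambda d_H(\vech_0,\vech_j),
\end{align*}
valid for every $i \leq k < j$ whenever $\Delta_k > 2\delta$. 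This shows that the single deterministic event $\{\Delta_k > 2\delta\}$ already forces correct separation of all $k(Q-k)$ neighbor/non-neighbor pairs, so the entire problem reduces to lower-bounding the random gap $\Delta_k$.

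Second, I would turn the hypothesis $\mathbb{E}_x[\Delta_k] \geq 2\delta + \sqrt{\frac{1}{c}\log\frac{Qk}{\epsilon}}$ into the tail bound $\Pr[\Delta_k < 2\delta] \leq \epsilon$. The key tool is the same Gaussian concentration inequality used in the proof of Lemma \ref{lem:approx}: for fixed $\vecx_0$, each distance $D_i := d(\vecx_0,\vecx_i) = \normtwo{\vecx_0 - \vecx_i}$ is a $1$-Lipschitz function of $\vecx_i$, so conditioning on the mixture component that generates $\vecx_i$ and applying concentration for a single Gaussian gives a sub-Gaussian tail $\Pr[\abs{D_i - \mathbb{E}[D_i]} \geq t] \leq 2e^{-ct^2}$. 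Union-bounding these deviations over the $O(Qk)$ relevant events and inverting the tail with per-event budget $\epsilon/(Qk)$ yields the slack $t = \sqrt{\frac{1}{c}\log\frac{Qk}{\epsilon}}$; on the resulting high-probability event the realized gap stays within this slack of $\mathbb{E}_x[\Delta_k]$, so $\Delta_k \geq \mathbb{E}_x[\Delta_k] - \sqrt{\frac{1}{c}\log\frac{Qk}{\epsilon}} \geq 2\delta$. Combining with the first step completes the proof.

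The hard part will be the concentration step, because $\Delta_k = d(\vecx_0,\vecx_{k+1}) - d(\vecx_0,\vecx_k)$ is a gap between consecutive \emph{order statistics} of the random distances $\{D_i\}$, so the identity of the $k$-th and $(k+1)$-th neighbors is itself random and $\mathbb{E}_x[\Delta_k]$ is not simply a difference of two fixed means. The clean way around this is to obtain \emph{uniform} control of all $Q$ distances simultaneously (hence the union bound), which makes the ordering-dependent gap robust to the random labeling, and to handle the mixture-of-Gaussians model by conditioning on the component assignments before invoking the single-Gaussian concentration bound and then averaging out. I would also need to verify that the constant $c$ produced by concentration (which depends on the largest component variance) matches the $c$ appearing in the hypothesis, and that the distortion bound $\delta$ legitimately applies to the query/neighbor pairs involving the held-out point $\vecx_0$.
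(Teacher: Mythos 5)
Your first (deterministic) step is fine and is in fact cleaner than what the paper does: you show that the single event $\{\Delta_k > 2\delta\}$ forces correct Hamming separation of \emph{all} $k(Q-k)$ neighbor/non-neighbor pairs at once, whereas the paper writes the failure event as a union $E=\cup\, e_{m,n}$, applies a union bound, and then argues (somewhat informally) that $e_{k,k+1}$ is the most probable of these events, so that $P(E)\leq k(Q-k)P(\Delta_k<2\delta)$. Both routes reduce the problem to bounding $P(\Delta_k < 2\delta)$; yours does so without the multiplicative $k(Q-k)$ factor and without the ``most probable event'' claim.

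The genuine gap is in your concentration step. You propose to control each distance $D_i=d(\vecx_0,\vecx_i)$ individually around its own mean $\mathbb{E}[D_i]$, union-bound over $O(Qk)$ such deviations, and conclude that $\Delta_k \geq \mathbb{E}_x[\Delta_k] - t$ on the good event. This does not follow: $\Delta_k$ is the gap between the $k$-th and $(k{+}1)$-th \emph{order statistics} of the $D_i$, and uniform control $|D_i-\mathbb{E}[D_i]|\leq s$ for all $i$ only yields $D_{(k+1)}-D_{(k)} \geq m_{(k+1)}-m_{(k)}-2s$, where $m_{(k)}$ denotes the $k$-th smallest of the \emph{means} $\mathbb{E}[D_i]$. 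The quantity $m_{(k+1)}-m_{(k)}$ is not $\mathbb{E}_x[\Delta_k]=\mathbb{E}[D_{(k+1)}-D_{(k)}]$ (order statistics and expectations do not commute), so the hypothesis of the theorem, which is stated in terms of $\mathbb{E}_x[\Delta_k]$, cannot be invoked. You correctly identify this as ``the hard part,'' but the uniform-control workaround you sketch does not resolve it. The paper instead proves a dedicated concentration lemma (\fref{lem:sg}): it treats $\Delta_k$ as a function of the concatenated vector $\vecy=[\vecx_0^T,\vecx_k^T,\vecx_{k+1}^T]^T$, shows via the triangle inequality and a singular-value bound that this function is Lipschitz with constant $2$, and then applies Talagrand's concentration inequality for Lipschitz functions of sub-Gaussian vectors to obtain the one-sided tail $P(\Delta_k-\mathbb{E}_x[\Delta_k]<-t)\leq e^{-ct^2}$ directly. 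Setting $t=\mathbb{E}_x[\Delta_k]-2\delta$ then closes the argument. To repair your proof you would need either this Lipschitz-concentration lemma for $\Delta_k$ itself, or an explicit argument converting your per-distance uniform bound into a lower bound involving $\mathbb{E}_x[\Delta_k]$ rather than the gap of ordered means. (A secondary bookkeeping issue: your deterministic reduction removes the paper's $kQ$ union-bound factor, so if your concentration step also produced that factor from a different union bound, you should check that the constants and the $\log\frac{Qk}{\epsilon}$ term still line up with the theorem as stated.)
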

In order to prove this theorem, we need the following Lemma:
\begin{lem} \label{lem:sg}
Let $\vecx_0, \ldots, \vecx_N$ and $\Delta_k$ be defined as in \fref{thm:knn}. Then, there exist a constant $c$ such that $P(\Delta_k - \mathbb{E}_x[\Delta_k] < t) \leq e^{-ct^2}$ for $t > 0$. 
\end{lem}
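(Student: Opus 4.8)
The statement to be proved is a lower-tail (sub-Gaussian) concentration bound; I read it as $P\big(\mathbb{E}_x[\Delta_k] - \Delta_k > t\big) \le e^{-ct^2}$, which is the direction actually needed to drive \fref{thm:knn} (there one sets $t = \sqrt{(1/c)\log(Qk/\epsilon)}$ and union-bounds over the $Q$ points and their $k$ neighbors). The plan is to exhibit $\Delta_k$ as a Lipschitz function of the underlying Gaussian randomness and then invoke the Gaussian concentration inequality already used in the proof of \fref{lem:approx}.

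First I would fix the query $\vecx_0$ and set $R_i = \normtwo{\vecx_0 - \vecx_i}$, so that $\Delta_k = R_{(k+1)} - R_{(k)}$ is the gap between two consecutive order statistics of $R_1,\dots,R_N$. The structural claim is that this gap is a $2$-Lipschitz function of the concatenated data vector $(\vecx_1,\dots,\vecx_N)$ in the Euclidean norm, which I would verify in three one-Lipschitz steps: (i) each coordinate map $\vecx_i \mapsto R_i$ is $1$-Lipschitz by the reverse triangle inequality, hence $(\vecx_1,\dots,\vecx_N)\mapsto(R_1,\dots,R_N)$ is $1$-Lipschitz since $\sum_i |R_i - R_i'|^2 \le \sum_i \normtwo{\vecx_i - \vecx_i'}^2$; (ii) each order statistic $(R_1,\dots,R_N)\mapsto R_{(j)}$ is $1$-Lipschitz because sorting is nonexpansive, $|R_{(j)} - R_{(j)}'| \le \norminf{R - R'} \le \normtwo{R - R'}$; and (iii) the difference $R_{(k+1)} - R_{(k)}$ of two such maps is therefore $2$-Lipschitz.

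Next I would reduce the Gaussian-mixture randomness to genuine standard-Gaussian randomness. Writing each sample as $\vecx_i = \mu_{Z_i} + \Sigma_{Z_i}^{1/2}\vecg_i$ with component label $Z_i$ and $\vecg_i \sim \mathcal{N}(0,I)$ independent, I condition on the labels $Z = (Z_1,\dots,Z_N)$. Composing the $2$-Lipschitz map above with the whitening maps scales the Lipschitz constant by at most the largest whitening factor, so conditionally on $Z$ the quantity $\Delta_k$ is an $L$-Lipschitz function of $(\vecg_1,\dots,\vecg_N)$ with $L = 2\max_p \sqrt{\lambda_{\max}(\Sigma_p)}$. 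The Gaussian concentration inequality then yields, conditionally on $Z$, the bound $P\big(\mathbb{E}_x[\Delta_k \mid Z] - \Delta_k > t \,\big|\, Z\big) \le e^{-t^2/(2L^2)}$, i.e. the claim with $c = 1/(2L^2)$.

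The step I expect to be the main obstacle is passing from the conditional mean $\mathbb{E}_x[\Delta_k \mid Z]$ to the unconditional mean $\mathbb{E}_x[\Delta_k]$ that appears in the statement and in \fref{thm:knn}. A mixture of well-separated Gaussians is multimodal, and Lipschitz functions of multimodal laws need not concentrate around their global mean with a label-independent constant, since the conditional means can drift with the label configuration. I would close this gap using the separability regime of \fref{thm:knn}: when $\mathbb{E}_x[\Delta_k]$ is large, the $k+1$ nearest neighbors of $\vecx_0$ fall, with overwhelming probability, in the same component as $\vecx_0$, so the effective conditional law is a single Gaussian and its conditional mean agrees with the quantity used in the theorem up to a term that can be absorbed into the constant. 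Making this reduction quantitative — bounding the probability of a \emph{cross-component} nearest neighbor and folding it into $c$ — is the delicate part of the argument.
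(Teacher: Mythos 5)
Your reading of the statement as the lower-tail bound $P(\Delta_k - \mathbb{E}_x[\Delta_k] < -t) \le e^{-ct^2}$ is the right one; the ``$<t$'' in the lemma as printed is a sign typo, and the paper's own proof concludes with the $-t$ version. Your overall route is the same as the paper's --- show $\Delta_k$ is Lipschitz in the underlying data and invoke measure concentration --- but your execution of the Lipschitz step is more careful and in two respects corrects the paper. First, the paper concatenates $\vecy = [\vecx_0^T, \vecx_k^T, \vecx_{k+1}^T]^T$ and calls it sub-Gaussian because the points are independent draws from the mixture; but $\vecx_k$ and $\vecx_{k+1}$ are order statistics of the distances to $\vecx_0$, not independent samples. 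Your factorization through the non-expansiveness of sorting handles this correctly and yields a $2$-Lipschitz dependence on all the data points jointly. Second, the paper only verifies the growth bound $\Delta_k(\vecy) \le 2\normtwo{\vecy}$, which is not the Lipschitz property needed for concentration; you bound the variation $|\Delta_k - \Delta_k'|$ in terms of the perturbation of the data, which is what is actually required.

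That said, the step you flag as the obstacle is a genuine gap, and it is the same gap the paper elides by citing ``Talagrand's inequality \ldots\ since $\vecy$ is sub-Gaussian.'' Dimension-free concentration of Lipschitz functions about the mean holds for Gaussian measures and bounded product measures, not for arbitrary sub-Gaussian laws: already in one dimension the identity function on $\tfrac{1}{2}\mathcal{N}(-a,1) + \tfrac{1}{2}\mathcal{N}(a,1)$ has $P(X - \mathbb{E}[X] < -t)$ close to $1/2$ for all $0 < t < a$, so no bound $e^{-ct^2}$ with $c$ independent of the component separation can hold around the unconditional mean. Your conditioning on the labels $Z$ correctly gives concentration around $\mathbb{E}_x[\Delta_k \mid Z]$ with $c = 1/(2L^2)$ and $L = 2\max_p\sqrt{\lambda_{\max}(\Sigma_p)}$, but the lemma and \fref{thm:knn} need the unconditional mean, and your proposed repair (that in the separability regime the relevant neighbors lie in a single component) is not carried out and would at minimum make $c$ depend on the mixture weights and separations. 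So your proposal stalls exactly where the paper's argument is also unsound; completing the label-conditioning step quantitatively would give a proof strictly stronger than the one in the paper.
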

\begin{proof}
Since the data points $\vecx_0$, $\vecx_k$ and $\vecx_{k+1}$ are independently generated from a finite mixture of Gaussian distributions, the random variable of their concatenation $\vecy = [\vecx_0^T, \vecx_k^T, \vecx_{k+1}^T]^T \in \mathbb{R}^{3N}$ is sub-Gaussian \cite{wainwrightpaper}. Then, we have
\begin{align*}
\Delta_k(\vecy) & = \|\vecx_0 - \vecx_{k+1}\|_2 - \|\vecx_0 - \vecx_k\|_2 \\
& = \| \Big( \begin{array}{ccc} \bI & 0 & 0 \\ 0 & 0 & 0 \\ 0 & 0 & -\bI \end{array}\Big) \vecy \|_2 - \| \Big( \begin{array}{ccc} \bI & 0 & 0 \\ 0 & -\bI & 0 \\ 0 & 0 & 0 \end{array}\Big) \vecy \|_2 \\
& \leq  \| \Big( \begin{array}{ccc} 2\bI & 0 & 0 \\ 0 & -\bI & 0 \\ 0 & 0 & -\bI \end{array}\Big) \vecy \|_2 \\
& \leq 2 \| \vecy \|_2,
\end{align*}
where we have used the triangular inequality in the second to last step, the Rayleigh-Ritz theorem \cite{hornjohnson}, and the fact that the maximum singular value of the matrix in the step before is $2$. 
This result means that $\Delta_k(\vecy)$ is a Lipschitz function of $\vecy$. 
Thus, by Talagrand's inequality \cite{talagrand}, we have that $P(\Delta_k - \mathbb{E}_x[\Delta_k] < -t) \leq e^{-ct^2}$ for some positive constant $c$ and $t > 0$, since $\vecy$ is sub-Gaussian. 
\end{proof}
Now we are ready to prove \fref{thm:knn}. 
\begin{proof}
Let $E$ denote the event that the set of top-$k$ nearest neighbors is not preserved in the Hamming space. Then, we have $E = \cup e_{m,n}$, where $e_{m,n}$ denote the event that $d_H(\vecx_0,\vecx_m) > d_H(\vecx_0,\vecx_n)$ with $m \in \{1,\ldots,k\}$ and $n \in \{k+1,\ldots,Q\}$. 
Then, using the union bound \cite{infotheory}, we have
\begin{align*}
P(E) & \leq \sum_{m,n} P(e_{m,n}) \leq k(Q-k) P(e_{k,k+1}) \\
& \quad = k(Q-k) P(d_H(\vecx_0,\vecx_k) > d_H(\vecx_0,\vecx_{k+1})) \\
& \quad = k(Q-k) P(d_H(\vecx_0,\vecx_{k+1}) < d_H(\vecx_0,\vecx_k)),
\end{align*}
where we have used the fact that the most possible event among all $e_{m,n}$ events is the one corresponding to the order mismatch between the $k^\text{th}$ and $k+1^\text{th}$ nearest neighbor.  
Now, note that the NIBH output $\delta$ satisfies $\max_{i,j} |d_H(\vecx_i,\vecx_j) - d(\vecx_i,\vecx_j)| \leq \delta$\footnote{Here we assume $\lambda = 1$ without loss of generality.}. 
Observe that $\Delta_k = d(\vecx_0, \vecx_{k+1}) - d(\vecx_0, \vecx_k) \geq 2\delta$ is a sufficient condition for $d_H(\vecx_0,\vecx_{k+1}) \geq d_H(\vecx_0,\vecx_k)$, since 
\begin{align*}
& d_H(\vecx_0,\vecx_{k+1}) - d_H(\vecx_0,\vecx_k) \\
& \quad \geq d(\vecx_0, \vecx_{k+1}) - \delta - d(\vecx_0, \vecx_k) - \delta \\
& \quad \geq 2\delta - 2\delta = 0,
\end{align*}
by the triangular inequality. This leads to
\begin{align*}
& P(d_H(\vecx_0,\vecx_{k+1}) < d_H(\vecx_0,\vecx_k)) \\
& \quad = 1 -  P(d_H(\vecx_0,\vecx_{k+1}) \geq d_H(\vecx_0,\vecx_k)) \\ 
& \quad \leq 1 - P(\Delta_k \geq 2\delta) = P(\Delta_k < 2\delta).
\end{align*}
Therefore, combining all the above and \fref{lem:sg}, the probability that the $k$-nearest neighbor is not preserved is bounded by
\begin{align*}
P(E) & \leq k(Q-k) P(d_H(\vecx_0,\vecx_{k+1}) < d_H(\vecx_0,\vecx_k)) \\ & \leq k(Q-k)P(\Delta_k < 2\delta) \\
& = k(Q-k) P(\Delta_k - \mathbb{E}_x[\Delta_k] < -(\mathbb{E}_x[\Delta_k] - 2\delta) ) \\
& \leq k(Q-k) e^{-c(\mathbb{E}_x[\Delta_k] - 2\delta)^2} \\
& \leq kQ e^{-c(\mathbb{E}_x[\Delta_k] - 2\delta)^2}.
\end{align*}
Now, let $kQ e^{-c(\mathbb{E}_x[\Delta_k] - 2\delta)^2} \leq \epsilon$, we have that the requirement for the $k$-nearest neighbors to be exactly preserved with probability at least $1-\epsilon$ is 
\begin{align*}
\mathbb{E}_x[\Delta_k] \geq 2 \delta + \sqrt{\frac{1}{c}\log \frac{Qk}{\epsilon}}.
\end{align*}
\end{proof}

\paragraph{Remark} 
Note that our bound on the number of nearest neighbor preserved $k$ depends on the final outcome of the NIBH algorithm in the value of $\delta$.  
In order to relate our result to the number of binary hash functions $M$ required, we can make use of \citep[Thm.~1.10]{yanivplan}.
For a bounded set $K \subset \mathbb{R}^N$ with diameter $1$, let $M \geq C \delta^{-6} w(K)^2$ where $w(K) = \mathbb{E}_x[\sup_{\vecx \in K}\langle g,\vecx \rangle]$ denotes the Gaussian width of $K$ and some constant $C$.  
Then, \citep[Thm.~1.10]{yanivplan} states that, with high probability, $h(x)$ as defined in (1) with a random matrix $\bW$ whose entries are independently generated from $\mathcal{N}(0,1)$ is a $\delta_0$-isometric embedding. 
Therefore, if we initialize NIBH with such a random $\bW$ which is likely to be $\delta_0$-isometric, then empirically (see Figure~\ref{fig:convergance}), the NIBH algorithm will learn a better embedding that is $\delta$-isometric with $\delta < \delta_0$. Therefore, we have that the number of hash functions $M$ required for $k$-nearest neighbor preservation is at least $M \sim (\mathbb{E}_x[\Delta_k] - \sqrt{\log(kQ/\epsilon})^{-6}w(\bX)^2$ assuming that the training dataset $\bX$ is properly normalized.

\section*{Empirical convergence of the NIBH algorithm}
Figure~\ref{fig:convergance} shows the empirical loss and the actual distortion parameter $\delta$ as a function of the iteration count $\ell$ in the NIBH algorithm as applied on $4950$ secants (i.e., $Q=$ 100) from the $\it{MNIST}$ dataset. 
The behavior of the empirical loss function closely matches that of $\delta$ as they gradually converge.
The curve empirically confirms that minimizing the loss function in each iteration of NIBH (using the ADMM framework) directly penalizes the non-convex loss function (distortion parameter $\delta$).
After initializing the NIBH algorithm with random entries for $\bW$, the value of the distortion parameter $\delta$ significantly drops in the first few iterations, empirically conforming that NIBH  
learns an embedding with significantly lower distortion parameter after a few iterations.

\begin{figure}[h]
\centering
\includegraphics[width=0.4\textwidth]{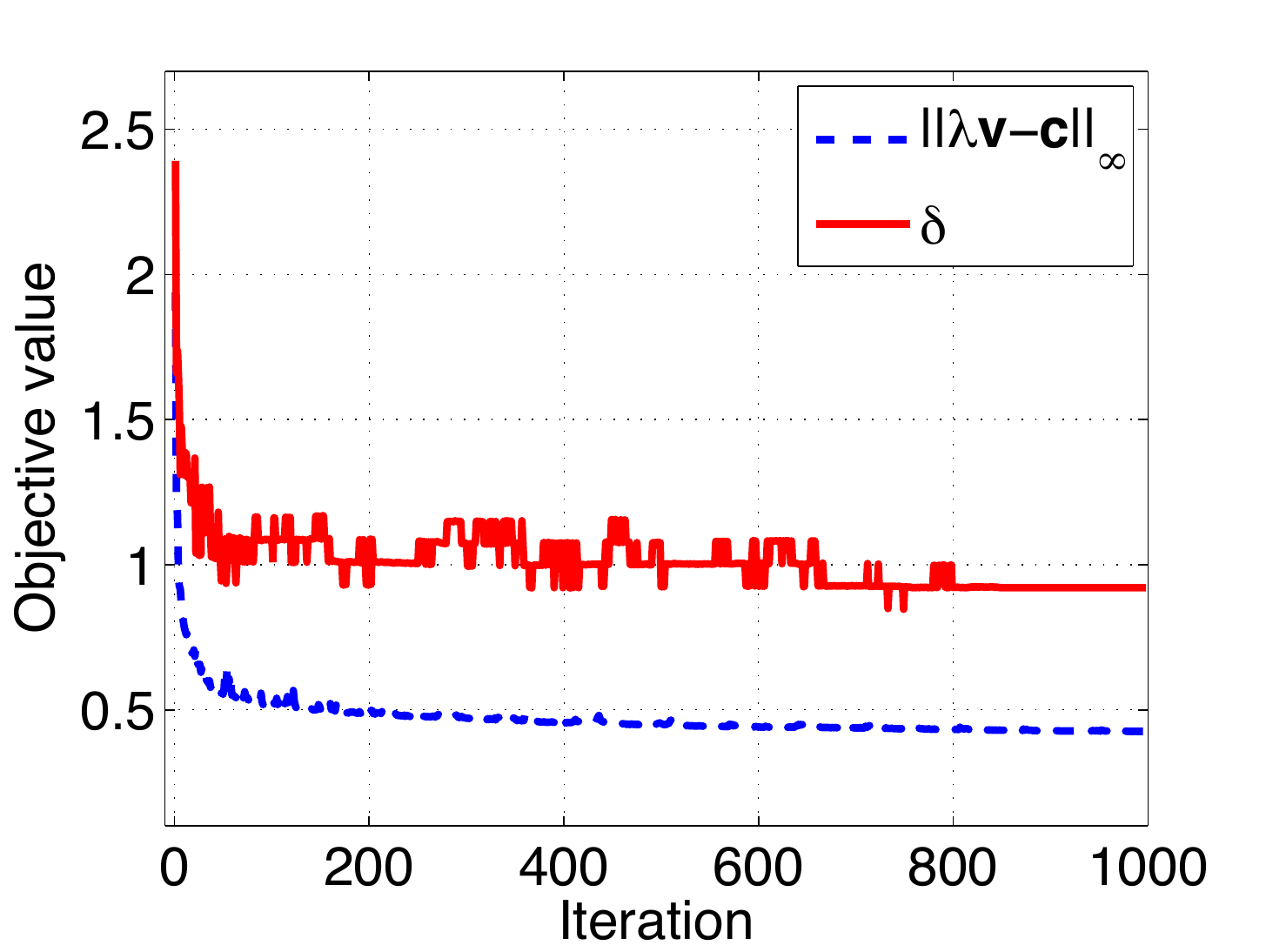}\label{fig:convergence}
\caption{Empirical convergence behavior of the NIBH algorithm.  Both the maximum distortion parameter $\delta$ and the loss function $||\lambda \vecv - \vecc||_{\infty}$ that approximates $\delta$ gradually decrease and converge as the number of iterations increases.  We see that the loss function of NIBH closely matches the behavior of the actual distortion parameter in each iteration.}%
\label{fig:convergance}
\end{figure}

\end{document}